\documentclass[journal]{IEEEtran}

\usepackage{amsmath,amssymb,amsfonts}
\usepackage{array}
\usepackage[caption=false,font=normalsize,labelfont=sf,textfont=sf]{subfig}
\usepackage{textcomp}
\usepackage{stfloats}
\usepackage{verbatim}
\usepackage{graphicx}
\usepackage{cite}
\usepackage{tabularx}
\usepackage{booktabs}

\def\BibTeX{{\rm B\kern-.05em{\sc i\kern-.025em b}\kern-.08em
    T\kern-.1667em\lower.7ex\hbox{E}\kern-.125emX}}

\usepackage{acronym}
\acrodef{5G}{the fifth generation}
\acrodef{MIMO}{multiple-input multiple-output}
\acrodef{MISO}{multiple-input single-output}
\acrodef{MU}{multi-user}
\acrodef{EE}{energy efficiency}
\acrodef{SE}{spectral efficiency}
\acrodef{RF}{radio-frequency}
\acrodef{BS}{base station}
\acrodef{UE}{user equipment}
\acrodef{AQNM}{additive quantization noise model}
\acrodef{DAC}{digital-to-analog converter}
\acrodef{PA}{power amplifier}
\acrodef{VGA}{variable-gain amplifier}
\acrodef{SINR}{signal-to-interference-plus-noise ratio}
\acrodef{WMMSE}{weighted minimum mean square error}
\acrodef{LoS}{line-of-sight}
\acrodef{NLoS}{non-line-of-sight}
\acrodef{AoA}{angle-of-arrival}
\acrodef{AoD}{angle-of-departure}
\acrodef{UPA}{uniform planar array}
\acrodef{ARV}{array response vector}
\acrodef{CGV}{channel gain vector}
\acrodef{AGV}{antenna gain vector}
\acrodef{EM}{electromagnetic}
\acrodef{CSI}{channel state information}
\acrodef{OFDM}{orthogonal frequency-division multiplexing}
\acrodef{RIS}{reconfigurable intelligent surface}
\acrodef{MA}{movable antenna}
\acrodef{3D}{three-dimensional}
\acrodef{AWGN}{additive white Gaussian noise}
\acrodef{ERA}{electromagnetically reconfigurable antenna}
\acrodef{MiLAC}[MiLAC]{microwave linear analog computer}
\acrodef{LMI}[LMI]{linear matrix inequality}
\acrodef{SDP}[SDP]{semi-definite programming}
\acrodef{PGD}[PGD]{projected gradient descent}
\acrodef{SVD}[SVD]{singular value decomposition}
\acrodef{MOO}[MOO]{multi-objective optimization}
\acrodef{KKT}[KKT]{Karush-Kuhn-Tucker}
\acrodef{FP}[FP]{fractional programming}
\acrodef{PS}[PS]{phase shifter}
\acrodef{SC}[SC]{sub-connected}
\acrodef{FC}[FC]{fully-connected}
\acrodef{SCA}[SCA]{successive convex approximation}
\acrodef{SOC}[SOC]{second-order cone}
\usepackage{amsthm}
\newtheoremstyle{zackplain}%
  {\topsep}{\topsep}{\itshape}{}{\bfseries}{:}{.5em}{}
\theoremstyle{zackplain}
\newtheorem{lemma}{\textbf{Lemma}}

\newtheorem{proposition}{\textbf{Proposition}}

\newtheorem{remark}{\textbf{Remark}}

\renewenvironment{proof}{\textit{\textbf{Proof:}}}{\hfill$\square$}
\usepackage{tikz}
\usetikzlibrary{calc, positioning, fit, backgrounds, decorations.pathmorphing, arrows.meta}
\tikzset{
  antenna/.pic={
    \draw[thick] (0,0) -- (0.15,0) -- (0.15,0.32);
    \fill[black] (0.05,0.46) -- (0.25,0.46) -- (0.15,0.32) -- cycle;
  }
}
\usepackage[utf8]{inputenc}
\usepackage{pgfplots}
\pgfplotsset{compat=1.18}
\usepackage{tabu,longtable}
\usepackage{diagbox}
\usepackage{threeparttable}
\usepackage{makecell}
\usepackage{multirow} %
\usepackage[bottom=1.02in,top=0.75in,left=0.625in,right=0.625in]{geometry}
\usepackage{units} 		%
\usepackage{bm} 		%
\usepackage{amssymb} 	%
\newcommand{\TT}{\mathsf{T}}
\newcommand{\HH}{\mathsf{H}}

\newcommand{\av}{{\bf a}}

\newcommand{\cv}{{\bf c}}

\newcommand{\fv}{{\bf f}}

\newcommand{\hv}{{\bf h}}

\newcommand{\nv}{{\bf n}}

\newcommand{\pv}{{\bf p}}
\newcommand{\qv}{{\bf q}}

\newcommand{\sv}{{\bf s}}

\newcommand{\uv}{{\bf u}}
\newcommand{\wv}{{\bf w}}

\newcommand{\xv}{{\bf x}}

\newcommand{\zv}{{\bf z}}
\newcommand{\zerov}{{\bf 0}}
\newcommand{\onev}{{\bf 1}}

\newcommand{\Am}{{\bf A}}
\newcommand{\Bm}{{\bf B}}
\newcommand{\Cm}{{\bf C}}
\newcommand{\Dm}{{\bf D}}

\newcommand{\Fm}{{\bf F}}
\newcommand{\Gm}{{\bf G}}
\newcommand{\Hm}{{\bf H}}
\newcommand{\Id}{{\bf I}}

\newcommand{\Lm}{{\bf L}}

\newcommand{\Qm}{{\bf Q}}
\newcommand{\Rm}{{\bf R}}

\newcommand{\Um}{{\bf U}}
\newcommand{\Wm}{{\bf W}}
\newcommand{\Vm}{{\bf V}}
\newcommand{\Xm}{{\bf X}}
\newcommand{\Ym}{{\bf Y}}
\newcommand{\Zm}{{\bf Z}}

\newcommand{\omegav}{\hbox{\boldmath$\omega$}}

\newcommand{\Pim}{\hbox{\boldmath$\Pi$}}

\newcommand{\Thetam}{\hbox{\boldmath$\Theta$}}

\usepackage{amsmath}

\usepackage{algorithm}
\usepackage{algpseudocode}%
\algtext*{EndWhile}%
\algtext*{EndIf}%
\algtext*{EndFor}

\makeatletter
\algnewcommand{\LineComment}[1]{\Statex \hskip\ALG@thistlm \(\triangleright\) #1}
\makeatother
\usepackage{xcolor}

\begin{document}

\title{Quantization-Aware EE Optimization and SE-EE Tradeoff for MiLAC-Aided MU-MISO Beamforming}

\author{Yuchen~Zhang, \emph{Member, IEEE}, Pinjun~Zheng, \emph{Member, IEEE}, and Tareq~Y.~Al-Naffouri, \emph{Fellow, IEEE}
\thanks{Y. Zhang and T. Y. Al-Naffouri are with the Computer, Electrical and Mathematical Sciences and Engineering Division, King Abdullah University of Science and Technology (KAUST), Thuwal 23955-6900, Saudi Arabia (e-mail: \{yuchen.zhang, tareq.alnaffouri\}@kaust.edu.sa).}%
\thanks{P. Zheng is with the School of Engineering, The University of British Columbia, Kelowna, BC V1V 1V7, Canada (e-mail: pinjun.zheng@ubc.ca).}
}

\maketitle

\begin{abstract}
In large antenna arrays, hardware power consumption becomes a dominant design constraint, making energy efficiency (EE) a first-class objective alongside spectral efficiency (SE). Microwave linear analog computer (MiLAC)-aided beamforming, whose front end is a passive reciprocal stream-to-antenna network, emerges as a promising candidate to address this tension by reducing the number of active radio-frequency chains to the stream number, at a moderate SE cost. Despite this promise, no EE optimization framework has been established for MiLAC-aided beamforming that accounts for digital-to-analog converter quantization noise and post-quantized transmit power. We fill this gap for downlink multiuser multiple-input single-output systems by formulating quantization-aware EE maximization over the MiLAC-feasible beamformer and characterizing the resulting SE-EE tradeoff. Three contributions follow. First, we prove a row-space optimality property of the effective MiLAC-aided beamformer, yielding an equivalent reduced-dimension reformulation whose complexity scales with the stream number rather than the antenna number. Second, we develop a low-complexity Dinkelbach-weighted minimum mean-square error algorithm aided by projected gradient descent that is guaranteed to converge to a stationary point. Third, we cast the SE-EE tradeoff as a multi-objective optimization problem and trace the tradeoff boundary via the weighted-sum method. An alternative reduced-dimension variable removes the bilinearity in the original parameterization, and an auxiliary-variable lift combined with successive convex approximation yields a convex per-iteration subproblem with guaranteed convergence. Numerical results on a DeepMIMO v4 deployment demonstrate that MiLAC-aided beamforming substantially improves EE over digital and hybrid benchmarks at a moderate SE cost. The tradeoff boundary analysis further shows that MiLAC-aided beamforming significantly expands the achievable SE-EE operating region relative to digital and hybrid architectures.
\end{abstract}

\begin{IEEEkeywords}
Beamforming, DAC quantization, energy efficiency, MiLAC, MU-MISO, PGD, SE-EE tradeoff, WMMSE.
\end{IEEEkeywords}

\looseness=-1 \section{Introduction}
Gigantic \ac{MIMO} has emerged as a promising direction for sixth-generation (6G) wireless networks~\cite{Wang2023Road6G,Bjornson2025Gigantic,Ning2026PMI}.
Scaling the antenna number from tens to hundreds or thousands proportionally inflates the transmitter-side power consumed by \ac{RF} chains, \acp{DAC}, and analog front-end circuitry~\cite{Larsson2014Massive,Bjornson2017Massive,Choi2022Energy,Ribeiro2018Energy}.
In digital architectures, both the chain number and the associated circuit power grow linearly with the antenna number $N$, so the power overhead can dwarf the transmit power itself at large $N$.
Hybrid analog-digital architectures reduce the active chain number to $N_{\mathrm{RF}} < N$ but introduce a \ac{PS} network whose power depends on the connectivity pattern: \ac{FC} hybrid beamforming requires $N N_{\mathrm{RF}}$ \acp{PS}, while \ac{SC} hybrid beamforming reduces the number to $N$~\cite{Sohrabi2016Hybrid,Ribeiro2018Energy}.
Because these architectures differ substantially in how their hardware power scales with $N$, beamforming design must be judged not only by the \ac{SE} it delivers but also by the \ac{EE}, the \ac{SE} obtained per Joule.

A \ac{MiLAC} is a tunable multiport microwave network that processes signals linearly as they propagate through it~\cite{Nerini2025AnalogI,Nerini2025AnalogII}.
For beamforming, $K$ active \ac{RF} chains, each fed by its \ac{DAC} pair, drive the stream-side ports while $N$ antennas connect to the remaining ports, so the stream-to-antenna mapping is executed at electromagnetic speed through a passive network rather than in digital or active analog hardware~\cite{Nerini2025AnalogII}.
Practically desirable considerations restrict admittances to purely imaginary (lossless) and symmetric (reciprocal) values, since nonzero real parts would either dissipate power or require active components, and asymmetric admittances would require non-reciprocal devices~\cite{Wu2026MiLAC}.
Under this restriction, the scattering matrix is unitary and symmetric, constraining the feasible beamformer set~\cite{Wu2026MiLAC}. This paper adopts lossless reciprocal \ac{MiLAC} throughout. The \ac{MiLAC} serves here only as this linear stream-to-antenna mapping. The general-purpose analog computations studied in~\cite{Nerini2025AnalogI}, such as the linear minimum mean square error estimator and matrix inversion, are not involved.

Two features distinguish \ac{MiLAC}-aided beamforming from the digital and hybrid architectures above.
First, its stream-to-antenna mapping is computed by a passive and purely reactive network. The lossless admittances dissipate no power, so the network consumes power only in the drive circuits that hold each admittance at its assigned value, and contains no per-element active circuitry, unlike the \ac{PS} networks of hybrid beamforming, whose phase shifters are active components that draw supply power in the signal path~\cite{Choi2022Energy,Ribeiro2018Energy}. The same separation between lossless elements and their drive circuits has been measured on \ac{RIS} prototypes~\cite{Wang2024RISPower}.
Second, its active front end consists of $K$ \ac{RF} chains, one per stream, whose \acp{DAC} carry the stream symbols directly. This is fewer than the $N$ chains of digital beamforming and comparable to the $N_{\mathrm{RF}}$ chains of hybrid beamforming, since $N_{\mathrm{RF}}$ is usually close to $K$. The difference with respect to hybrid beamforming lies in what the same number of chains delivers: with equally many chains, \ac{MiLAC}-aided beamforming attains comparable or superior rate performance, needs no symbol-level digital beamforming since the mapping is computed by the network, and is particularly suited to low-resolution \acp{DAC}~\cite{Wu2026MiLAC}, all while replacing the powered \ac{PS} network by the passive one.
Importantly, the passive network's drive power still grows with network size (quadratically in $N{+}K$), so the architectural advantage is a reallocation of $N$-dependent power from active to passive hardware, not a complete decoupling from the antenna number.
Whether (and over what operating regime) this reallocation yields an \ac{EE} gain is the quantitative question this paper answers.

\looseness=-1 The \ac{MiLAC} concept has developed rapidly along two threads.
The first treats \ac{MiLAC} as a general-purpose analog computer: the foundational theory in~\cite{Nerini2025AnalogI,Nerini2025AnalogII} established that, with unconstrained admittances, a tunable multiport network can implement general linear operations with favorable computational complexity.
The second applies \ac{MiLAC} to beamforming under lossless and reciprocal constraints.
In the point-to-point setting, lossless reciprocal MiLACs achieve full \ac{MIMO} capacity~\cite{Nerini2025Capacity}, and reduced-complexity stem-connected architectures lower the admittance number from $\mathcal{O}((N{+}K)^2)$ to $\mathcal{O}(NK)$ while preserving capacity~\cite{Nerini2025Reduced}.
In the multiuser multiple-input single-output (MU-MISO) setting, the feasible beamformer set was characterized and beamforming designs were proposed~\cite{Wu2026MiLAC}, while a performance-limit analysis showed that lossless reciprocal MiLACs cannot replicate every fully digital beamformer, although the \ac{SE} gap shrinks as $N/K$ grows~\cite{Fang2026Performance}.

Complementary efforts address analog-domain channel estimation~\cite{Zhang2026Channel,Zhang2026ChannelBF}, physics-compliant mutual-coupling-aware modeling~\cite{Nerini2026Physics}, and hardware realizations using hybrid couplers and \acp{PS}~\cite{Nerini2026HybridCouplers}. Beyond \ac{MiLAC}, stacked intelligent metasurfaces likewise process signals in the wave domain, using multiple stacked programmable metasurface layers for holographic \ac{MIMO} communications~\cite{An2023SIM}.

Despite this progress, all existing \ac{MiLAC} studies optimize for \ac{SE}. None addresses the power-consumption accounting essential for \ac{EE} analysis.
Three elements are missing: (i)~a transmitter-side power model that captures \ac{DAC} quantization noise via the \ac{AQNM} and accounts for post-quantized transmit power, (ii)~an \ac{EE}-oriented solution that exploits \ac{MiLAC}-aided beamforming's unique stream-domain signal structure, and (iii)~a systematic characterization of the operating regime in which \ac{MiLAC}-aided beamforming's hardware advantage translates into a tangible \ac{SE}-\ac{EE} benefit relative to digital and hybrid benchmarks. The \ac{MiLAC} network and the \acp{DAC} must be treated jointly because the architecture itself ties them together. A \ac{MiLAC}-aided transmitter contains no digital beamforming: each of the $K$ streams passes through its own \ac{DAC} pair, is amplified, and is fed into the passive network, which computes the stream-to-antenna mapping in the analog domain. The quantization distortion therefore enters at the streams, before the mapping, and its covariance is correlated across antennas instead of diagonal as under per-antenna quantization. For the same reason, only $K$ \acp{DAC}, instead of $N$, contribute to the power in the \ac{EE} denominator. Both consequences follow from where the architecture places the \acp{DAC} relative to the network, and both shape the formulation and the solution developed below for \ac{MiLAC}-aided beamforming.

\looseness=-1 \ac{EE} optimization for large-array architectures has been studied extensively.
Foundational fractional-programming frameworks appear in~\cite{Zappone2015Energy,Ng2012Energy}, and scaling laws for very large arrays are analyzed in~\cite{Ngo2013Energy}.
For fully digital \ac{MIMO}, jointly optimal antenna/user/power \ac{EE} design is studied in~\cite{Bjornson2015Optimal}, joint \ac{EE}--\ac{SE} analysis is treated in~\cite{Bjornson2017Massive}, coordinated multi-cell \ac{EE} precoding is developed in~\cite{He2014Energy}, and the interplay between \ac{DAC} resolution and \ac{EE} is analyzed in~\cite{Choi2022Energy,Ribeiro2018Energy,Orhan2015Energy}.
For hybrid beamforming, low-complexity designs that reduce front-end power appear in~\cite{Sohrabi2016Hybrid,Liang2014LowComplexity}, and architecture-level \ac{EE} comparisons accounting for \ac{PS} power are reported in~\cite{MendezRial2016Hybrid,Ribeiro2018Energy}.
For \ac{RIS}-aided systems, which share the passive multiport hardware philosophy, \ac{EE} and power-consumption frameworks appear in~\cite{Zhou2024Optimizing,Wang2024RISPower}, and \ac{EE} maximization for dynamic metasurface antennas, which realize large arrays with reduced hardware power, is studied in~\cite{You2023DMA}.
None of these works addresses \ac{MiLAC} under the combination of transmitter-side \ac{DAC}-\ac{AQNM}, post-quantized transmit power accounting, and a static-power model for the drive circuits of the passive network.

This paper aims to answer two coupled questions: \emph{how should \ac{MiLAC}-aided beamforming be optimized for \ac{EE} under quantization-aware modeling, and in which operating regimes does that optimization expose a meaningful \ac{SE}-\ac{EE} tradeoff relative to digital and hybrid benchmarks?}
The main contributions are as follows.

\begin{itemize}
    \item \textbf{Quantization-aware \ac{EE} formulation for \ac{MiLAC}:}
    We introduce the first \ac{EE} maximization formulation for MU-MISO lossless reciprocal \ac{MiLAC}-aided beamforming, coupling the transmitter-side \ac{DAC}-\ac{AQNM} signal model with a post-quantized transmit-power constraint and a power consumption model that captures the \ac{PA} supply, stream-scaled \ac{RF}/\ac{DAC} circuit power, and \ac{MiLAC} circuit-incurred static power.\footnote{\looseness=-1 Online baseband-compute power is omitted because it is either common to all architectures or too implementation-specific to sharpen the comparison.}

    \item \textbf{Row-space reduction and low-complexity \ac{EE} solution:}
    We prove row-space optimality and derive an equivalent reduced-dimension parameterization that preserves both the achievable rate and the post-quantized power.
    This reduction exploits a structural property unique to \ac{MiLAC}: stream-domain \ac{DAC} placement yields a full-covariance \ac{AQNM} distortion compatible with row-space projection, a property that digital/hybrid beamforming lacks.
    On the reduced-dimension problem, we develop a low-complexity Dinkelbach-\ac{WMMSE} algorithm with closed-form weighted power updates and a \ac{PGD}-based $\Ym$ update, cutting the dominant per-iteration cost from $N$- to $K$-scaling.

    \item \textbf{\ac{SE}-\ac{EE} tradeoff characterization:}
    We formulate joint \ac{SE} and \ac{EE} maximization as a \ac{MOO} problem and adopt the weighted-sum method to trace the tradeoff boundary. Each weighted-sum subproblem is recast in an equivalent $K\times K$ reduced variable that eliminates the bilinearity, then lifted with auxiliary variables and solved via \ac{SCA}. The per-iteration subproblem is convex and every limit point of the iterates is proved to satisfy the \ac{KKT} conditions.

    \item \textbf{Site-specific evaluation and operating-regime insights:}
    Extensive simulations on a DeepMIMO v4 deployment show that \ac{MiLAC} attains significantly higher \ac{EE} than digital and both \ac{FC} and \ac{SC} hybrid benchmarks across a transmit-power sweep, at a moderate \ac{SE} cost.
    The tradeoff boundary comparison further reveals that \ac{MiLAC} achieves a substantially expanded \ac{SE}-\ac{EE} operating region that digital and hybrid architectures cannot match.
\end{itemize}

The rest of the paper is organized as follows.
Section~\ref{sec:system_model} introduces the \ac{MiLAC} system model and the quantization-aware \ac{EE} formulation.
Section~\ref{sec:reformulation} formulates the optimization problem and identifies the reduced-dimension property.
Section~\ref{sec:lc_solver} develops the low-complexity Dinkelbach-\ac{WMMSE} solution.
Section~\ref{sec:continuation} formulates the \ac{SE}-\ac{EE} tradeoff as a \ac{MOO} problem and develops the weighted-sum solver.
Section~\ref{sec:results} reports benchmark comparisons.
Section~\ref{sec:conclusion} concludes.

\looseness=-1 \emph{Notations:}
Scalars are denoted by lowercase letters, vectors by bold lowercase letters, and matrices by bold uppercase letters.
The Euclidean norm and Frobenius norm are $\|\cdot\|$ and $\|\cdot\|_{\mathrm{F}}$, and $\|\cdot\|_2$ is the spectral norm.
Transpose and Hermitian transpose are $(\cdot)^{\TT}$ and $(\cdot)^{\HH}$.
For a matrix $\Am$, $\operatorname{tr}(\Am)$ is the trace, $\operatorname{diag}(\av)$ is the diagonal matrix with entries $\av$, and $\Am \preceq \Bm$ means $\Bm - \Am$ is positive semidefinite.
$\Re\{a\}$ and $\bar{a}$ denote the real part and complex conjugate, respectively.
$\odot$ is the Hadamard product and $|\Am|^{\odot 2}$ the matrix of elementwise squared magnitudes.
$\operatorname{Ran}(\Am)$ and $\operatorname{Null}(\Am)$ are the range and null spaces of $\Am$.
$\Id_n$ is the $n \times n$ identity matrix, $[x]_+ = \max(x,0)$, and $\mathcal{CN}(\av,\Cm)$ denotes a circularly symmetric complex Gaussian distribution with mean $\av$ and covariance $\Cm$.

\begin{figure*}[t]
\centering
\begin{tikzpicture}[
  >=Stealth, font=\small,
  every node/.style={inner sep=1pt},
  block/.style={draw, rounded corners=3pt, minimum height=1.05cm, minimum width=1.7cm,
                align=center, line width=0.8pt, fill=blue!6},
  port/.style={circle, draw, fill=white, line width=0.5pt, minimum size=4.5pt, inner sep=0pt},
  annot/.style={font=\footnotesize, align=center},
  smallannot/.style={font=\scriptsize, align=center},
  flow/.style={->, line width=0.9pt},
  zonelabel/.style={font=\scriptsize\bfseries},
  chainblk/.style={minimum height=0.62cm, font=\small, align=center}
]

\def\rA{1.20}\def\rB{0.40}\def\rC{-1.20}\def\rD{-0.40}
\def\aA{1.20}\def\aB{0.62}\def\aC{0.04}\def\aD{-1.20}\def\aE{-0.56}
\def\xs{0.45}\def\xdac{2.90}\def\xpa{5.60}\def\xant{14.15}\def\xusr{16.85}

\node[block, chainblk, minimum width=1.76cm] (dacA) at (\xdac,\rA) {$b$-bit DAC};
\node[block, chainblk, minimum width=1.76cm] (dacB) at (\xdac,\rB) {$b$-bit DAC};
\node[block, chainblk, minimum width=1.76cm] (dacC) at (\xdac,\rC) {$b$-bit DAC};
\node[block, chainblk, minimum width=1.00cm, fill=blue!16] (paA) at (\xpa,\rA) {PA};
\node[block, chainblk, minimum width=1.00cm, fill=blue!16] (paB) at (\xpa,\rB) {PA};
\node[block, chainblk, minimum width=1.00cm, fill=blue!16] (paC) at (\xpa,\rC) {PA};
\node[font=\footnotesize] at (\xdac,\rD) {$\vdots$};
\node[font=\footnotesize] at (\xpa ,\rD) {$\vdots$};
\coordinate (fecenter) at (4.25,0);

\node[draw, rounded corners=3pt, fill=orange!8, line width=1.0pt,
      minimum width=4.6cm, minimum height=3.4cm] (milac) at (9.70,0) {};
\node[font=\small\bfseries] at ([yshift=-4.2mm]milac.north) {$(N{+}K)$-port MiLAC};

\coordinate (mc) at ([yshift=-1mm]milac.center);
\coordinate (p1) at ($(mc)+(-0.95, 0.55)$);
\coordinate (p2) at ($(mc)+( 0.95, 0.55)$);
\coordinate (p3) at ($(mc)+( 0.95,-0.60)$);
\coordinate (p4) at ($(mc)+(-0.95,-0.60)$);
\node[port] (n1) at (p1) {};
\node[port] (n2) at (p2) {};
\node[port] (n3) at (p3) {};
\node[port] (n4) at (p4) {};

\node[font=\scriptsize] at ([xshift=-2mm,yshift= 2mm]n1) {1};
\node[font=\scriptsize] at ([xshift= 2mm,yshift= 2mm]n2) {2};
\node[font=\scriptsize] at ([xshift= 2mm,yshift=-2mm]n3) {3};
\node[font=\scriptsize] at ([xshift=-2mm,yshift=-2mm]n4) {4};

\draw[line width=0.5pt] (n1) -- (n2);
\node[draw, line width=0.5pt, fill=white, inner sep=0pt,
      minimum width=3mm, minimum height=1.8mm]
  (adm12) at ($(n1)!0.5!(n2)$) {};
\node[font=\tiny, inner sep=1pt, above=0.2pt of adm12] {$\tilde{Y}_{12}$};

\draw[line width=0.5pt] (n4) -- (n3);
\node[draw, line width=0.5pt, fill=white, inner sep=0pt,
      minimum width=3mm, minimum height=1.8mm]
  (adm43) at ($(n4)!0.5!(n3)$) {};
\node[font=\tiny, inner sep=1pt, below=0.2pt of adm43] {$\tilde{Y}_{43}$};

\draw[line width=0.5pt] (n1) -- (n4);
\node[draw, line width=0.5pt, fill=white, inner sep=0pt,
      minimum width=1.8mm, minimum height=3mm]
  (adm14) at ($(n1)!0.5!(n4)$) {};
\node[font=\tiny, inner sep=1pt, left=0.8pt of adm14] {$\tilde{Y}_{14}$};

\draw[line width=0.5pt] (n2) -- (n3);
\node[draw, line width=0.5pt, fill=white, inner sep=0pt,
      minimum width=1.8mm, minimum height=3mm]
  (adm23) at ($(n2)!0.5!(n3)$) {};
\node[font=\tiny, inner sep=1pt, right=0.8pt of adm23] {$\tilde{Y}_{23}$};

\draw[line width=0.35pt, gray!60] (n1) -- (n3);
\node[draw, line width=0.4pt, fill=white, inner sep=0pt,
      minimum width=2.6mm, minimum height=1.5mm, rotate=-31]
  (adm13) at ($(n1)!0.3!(n3)$) {};
\node[font=\tiny, fill=orange!8, inner sep=0.6pt]
  at ([xshift=1.2mm, yshift=1.7mm]adm13) {$\tilde{Y}_{13}$};

\draw[line width=0.35pt, gray!60] (n2) -- (n4);
\node[draw, line width=0.4pt, fill=white, inner sep=0pt,
      minimum width=2.6mm, minimum height=1.5mm, rotate=31]
  (adm24) at ($(n2)!0.3!(n4)$) {};
\node[font=\tiny, fill=orange!8, inner sep=0.6pt]
  at ([xshift=-1.2mm, yshift=1.7mm]adm24) {$\tilde{Y}_{24}$};

\draw[line width=0.5pt] (n1) -- ++(0, 0.45cm);
\node[draw, line width=0.5pt, fill=white, inner sep=0pt,
      minimum width=1.8mm, minimum height=3mm]
  (adm1) at ([yshift=0.225cm]n1) {};
\node[font=\tiny, inner sep=1pt, right=0.8pt of adm1] {$\tilde{Y}_1$};
\coordinate (g1) at ([yshift=0.45cm]n1);
\draw[line width=0.5pt] ([xshift=-1.4mm]g1) -- ([xshift=1.4mm]g1);
\draw[line width=0.5pt] ([xshift=-0.9mm,yshift=0.4mm]g1) -- ([xshift=0.9mm,yshift=0.4mm]g1);
\draw[line width=0.5pt] ([xshift=-0.4mm,yshift=0.8mm]g1) -- ([xshift=0.4mm,yshift=0.8mm]g1);

\draw[line width=0.5pt] (n2) -- ++(0, 0.45cm);
\node[draw, line width=0.5pt, fill=white, inner sep=0pt,
      minimum width=1.8mm, minimum height=3mm]
  (adm2) at ([yshift=0.225cm]n2) {};
\node[font=\tiny, inner sep=1pt, left=0.8pt of adm2] {$\tilde{Y}_2$};
\coordinate (g2) at ([yshift=0.45cm]n2);
\draw[line width=0.5pt] ([xshift=-1.4mm]g2) -- ([xshift=1.4mm]g2);
\draw[line width=0.5pt] ([xshift=-0.9mm,yshift=0.4mm]g2) -- ([xshift=0.9mm,yshift=0.4mm]g2);
\draw[line width=0.5pt] ([xshift=-0.4mm,yshift=0.8mm]g2) -- ([xshift=0.4mm,yshift=0.8mm]g2);

\draw[line width=0.5pt] (n3) -- ++(0, -0.45cm);
\node[draw, line width=0.5pt, fill=white, inner sep=0pt,
      minimum width=1.8mm, minimum height=3mm]
  (adm3) at ([yshift=-0.225cm]n3) {};
\node[font=\tiny, inner sep=1pt, left=0.8pt of adm3] {$\tilde{Y}_3$};
\coordinate (g3) at ([yshift=-0.45cm]n3);
\draw[line width=0.5pt] ([xshift=-1.4mm]g3) -- ([xshift=1.4mm]g3);
\draw[line width=0.5pt] ([xshift=-0.9mm,yshift=-0.4mm]g3) -- ([xshift=0.9mm,yshift=-0.4mm]g3);
\draw[line width=0.5pt] ([xshift=-0.4mm,yshift=-0.8mm]g3) -- ([xshift=0.4mm,yshift=-0.8mm]g3);

\draw[line width=0.5pt] (n4) -- ++(0, -0.45cm);
\node[draw, line width=0.5pt, fill=white, inner sep=0pt,
      minimum width=1.8mm, minimum height=3mm]
  (adm4) at ([yshift=-0.225cm]n4) {};
\node[font=\tiny, inner sep=1pt, right=0.8pt of adm4] {$\tilde{Y}_4$};
\coordinate (g4) at ([yshift=-0.45cm]n4);
\draw[line width=0.5pt] ([xshift=-1.4mm]g4) -- ([xshift=1.4mm]g4);
\draw[line width=0.5pt] ([xshift=-0.9mm,yshift=-0.4mm]g4) -- ([xshift=0.9mm,yshift=-0.4mm]g4);
\draw[line width=0.5pt] ([xshift=-0.4mm,yshift=-0.8mm]g4) -- ([xshift=0.4mm,yshift=-0.8mm]g4);

\foreach \y in {\aA,\aB,\aC,\aD}{
  \draw[line width=0.7pt] (\xant,\y) -- ++(0,0.19);
  \draw[line width=0.7pt] (\xant-0.13,\y+0.32) -- (\xant,\y+0.19) -- (\xant+0.13,\y+0.32);
}
\node[font=\footnotesize] at (\xant,\aE) {$\vdots$};

\foreach \y in {\rA,\rB,\rC}{
  \draw[line width=0.7pt, fill=white, rounded corners=1pt]
    (\xusr-0.13,\y-0.20) rectangle (\xusr+0.13,\y+0.16);
  \draw[line width=0.5pt] (\xusr-0.075,\y-0.13) -- (\xusr+0.075,\y-0.13);
  \draw[line width=0.5pt] (\xusr-0.075,\y-0.03) -- (\xusr+0.075,\y-0.03);
  \draw[line width=0.7pt] (\xusr+0.05,\y+0.16) -- (\xusr+0.05,\y+0.34);
  \fill (\xusr+0.05,\y+0.34) circle (0.035);
}
\node[font=\footnotesize] at (\xusr,\rD) {$\vdots$};

\foreach \y/\a/\b/\c in {\rA/{$s_1$}/{$s_{q,1}$}/{$c_{q,1}$},
                         \rB/{$s_2$}/{$s_{q,2}$}/{$c_{q,2}$},
                         \rC/{$s_K$}/{$s_{q,K}$}/{$c_{q,K}$}}{
  \draw[flow] (\xs,\y) -- (\xdac-0.88,\y)
    node[midway, below=1.5pt, font=\scriptsize] {\a};
  \draw[flow] (\xdac+0.88,\y) -- (\xpa-0.50,\y)
    node[midway, below=1.5pt, font=\scriptsize] {\b};
  \draw[flow] (\xpa+0.50,\y) -- ([yshift=\y cm]milac.west)
    node[midway, below=1.5pt, font=\scriptsize] {\c};
}
\foreach \y/\lb in {\aA/{$x_{q,1}$}, \aB/{$x_{q,2}$}, \aC/{$x_{q,3}$}, \aD/{$x_{q,N}$}}{
  \draw[flow] ([yshift=\y cm]milac.east) -- (\xant,\y)
    node[pos=0.26, below=1.5pt, font=\scriptsize] {\lb};
}

\foreach \ay in {\aA,\aB,\aC,\aD}{
  \foreach \uy in {\rA,\rB,\rC}{
    \draw[line width=0.3pt, gray!40] (\xant+0.24,\ay) -- (\xusr-0.20,\uy);
  }
}
\node[font=\small\bfseries, fill=green!3, inner sep=2pt, rounded corners=1pt]
  at ($(\xant,1.42)!0.5!(\xusr,1.42)$) {$\Hm$};

\node[smallannot] (antlbl) at (\xant,-1.72) {Antenna array};
\node[smallannot] (usrlbl) at (\xusr,-1.72) {$K$ users};

\coordinate (annbase) at (0,-2.28);
\coordinate (sqcol) at ($(\xdac+0.88,0)!0.5!(\xpa-0.50,0)$);
\coordinate (cqcol) at ($(\xpa+0.50,0)!0.5!(milac.west)$);
\node[annot] at (sqcol |- annbase) {$\sv_q=\alpha_q\sv+\nv_s$};
\node[annot] at ([xshift=2.2mm]cqcol |- annbase) {$\cv_q=\operatorname{diag}(\sqrt{\pv})\,\sv_q$};
\node[annot] at ([xshift=2.4mm]milac.center |- annbase)
  {$\Fm=[\Thetam_{\mathrm{M}}]_{K+1:K+N,\,1:K}$};
\node[annot] at (\xant,0 |- annbase) {$\xv_q=\alpha_q\Wm\sv+\qv$};

\begin{scope}[on background layer]
  \path ([yshift= 1.9cm]fecenter)     coordinate (aft_t);
  \path ([yshift=-1.9cm]fecenter)     coordinate (aft_b);
  \path ([yshift= 1.9cm]milac.center) coordinate (ps_t);
  \path ([yshift=-1.9cm]milac.center) coordinate (ps_b);
  \coordinate (ch_t) at (\xant, 1.9);
  \coordinate (ch_b) at (\xant,-1.9);
  \coordinate (chR)  at (\xusr+0.35, 0);

  \node[draw, rounded corners=6pt, fill=blue!3, line width=0.8pt,
        inner xsep=8pt, inner ysep=0pt,
        fit=(dacA)(dacC)(paA)(paC)(aft_t)(aft_b),
        label={[zonelabel]above:Active Front End}] {};
  \node[draw, rounded corners=6pt, fill=orange!4, line width=0.8pt,
        inner xsep=6pt, inner ysep=0pt, fit=(milac)(ps_t)(ps_b),
        label={[zonelabel]above:Passive Network}] {};
  \node[draw, rounded corners=6pt, fill=green!3, line width=0.8pt,
        inner xsep=10pt, inner ysep=0pt,
        fit=(ch_t)(ch_b)(antlbl)(usrlbl)(chR),
        label={[zonelabel]above:Wireless Channel}] {};
\end{scope}

\end{tikzpicture}
\vspace{-0.3em}
\caption{Downlink MU-MISO MiLAC-aided beamforming architecture. The active front end consists of $K$ parallel \ac{RF} chains, each with a $b$-bit \ac{DAC} followed by a variable-gain \ac{PA} that sets the per-stream power $p_k$ and drives one \ac{MiLAC} input port. A lossless and reciprocal $(N+K)$-port \ac{MiLAC} (\ac{FC} tunable-admittance network) maps the $K$ stream ports to the $N$ antenna ports, yielding an effective stream-to-antenna mapping constrained by $\|\Fm\|_2 \le 1$. The $N$ antenna ports are fed passively, and the antenna array serves $K$ single-antenna users over the wireless channel $\Hm$. }
\label{fig:system}
\end{figure*}

\section{System Model}\label{sec:system_model}

\subsection{Quantization-Aware Signal Model}\label{sec:aqnm_model}

As illustrated in Fig.~\ref{fig:system}, we consider a downlink MU-MISO \ac{MiLAC}-aided beamforming system in which a \ac{BS} equipped with $N$ transmit antennas serves $K$ single-antenna users, where $K \le N$. The transmitter employs $K$ \ac{RF} chains to generate $K$ data streams, which are quantized by \acp{DAC}, amplified by their per-chain \acp{PA}, and then injected into the stream-side ports of a lossless and reciprocal $(N+K)$-port \ac{MiLAC}. The remaining $N$ ports are connected to the antenna array. Each \ac{RF} chain terminates in a dedicated \ac{PA} that drives one \ac{MiLAC} input port, so the $K$ active chains carry the full transmit power while the $(N+K)$-port network and its $N$ antenna ports remain entirely passive. The number of active devices is thus $K$, independent of the antenna number~\cite{Wu2026MiLAC,Nerini2025Capacity}.

Let
\begin{equation}
\cv=\operatorname{diag}\bigl(\sqrt{\pv}\bigr)\sv
\end{equation}
denote the power-loaded stream vector, where $\pv=[p_1,\ldots,p_K]^{\mathsf T}\succeq \zerov$ collects the per-stream powers and $\sv\sim\mathcal{CN}(\zerov,\Id_K)$ is the data symbol vector. Each stream passes through a $b$-bit \ac{DAC}, whose unit-power input $\sv$ is quantized under the \ac{AQNM}~\cite{Choi2022Energy} as
\begin{equation}
\sv_q = \alpha_q\sv + \nv_s,
\end{equation}
where $\nv_s\sim\mathcal{CN}(\zerov,\alpha_q\beta_q\Id_K)$ denotes the quantization noise and is uncorrelated with $\sv$. Here, $\alpha_q=1-\beta_q$ is the quantization gain, while $\beta_q$ is the normalized per-element distortion variance, tabulated for $b\le 5$ and approximated for $b>5$ as $\beta_q\approx \frac{\pi\sqrt{3}}{2}2^{-2b}$~\cite{Choi2022Energy}. The \ac{DAC} output is then amplified by the per-chain \ac{PA}, whose gain sets the transmit power of its stream, yielding
\begin{equation}
\cv_q = \operatorname{diag}\bigl(\sqrt{\pv}\bigr)\sv_q = \alpha_q\cv + \nv_c,
\end{equation}
where $\nv_c=\operatorname{diag}(\sqrt{\pv})\,\nv_s\sim\mathcal{CN}\bigl(\zerov,\alpha_q\beta_q\operatorname{diag}(\pv)\bigr)$ is uncorrelated with $\cv$.

The \ac{MiLAC} network maps $\cv_q$ linearly to the antenna domain as
\begin{equation}\label{eq:txq_milac}
\xv_q
=
\Fm \cv_q
=
\alpha_q \Fm \operatorname{diag}\bigl(\sqrt{\pv}\bigr)\sv + \Fm \nv_c,
\end{equation}
where $\Fm\in\mathbb{C}^{N\times K}$ denotes the stream-to-antenna mapping induced by the \ac{MiLAC}. Its microwave-circuit structure will be detailed in the next subsection.

Define the effective \ac{MiLAC}-aided beamformer as
\begin{equation}\label{eq:effective_beamformer}
\Wm=\Fm\operatorname{diag}\bigl(\sqrt{\pv}\bigr).
\end{equation}
Then \eqref{eq:txq_milac} becomes
\begin{equation}
\xv_q = \alpha_q \Wm \sv + \qv,
\end{equation}
where $\qv=\Fm\nv_c$ denotes the antenna-domain distortion. Its covariance is
\begin{equation}\label{eq:rq_milac}
\Rm_q
=
\mathbb{E}\bigl[\qv\qv^\HH\bigr]
=
\Fm\bigl(\alpha_q\beta_q\operatorname{diag}(\pv)\bigr)\Fm^\HH
=
\alpha_q\beta_q\Wm\Wm^\HH.
\end{equation}
Hence, $\Rm_q$ lies entirely in the \emph{column space} of $\Wm$. This structure is absent in conventional digital and hybrid architectures. In particular, digital beamforming with per-antenna \acp{DAC} typically yields a diagonal distortion covariance in the antenna domain, whereas hybrid beamforming does not in general lead to a covariance proportional to $\Wm\Wm^\HH$. 
This structured form of $\Rm_q$ will be instrumental in the exact row-space reduction developed in Section~\ref{sec:reformulation}.

\looseness=-1 The \acp{PA} deliver the stream signal into the network, so the post-quantized transmit power is measured at the \ac{MiLAC} input ports,
\begin{align}\label{eq:ptx_milac}
P_{\mathrm{tx}}^{\mathrm{milac}}\left(\pv\right)
&=
\mathbb{E}\bigl[\|\cv_q\|_2^2\bigr]=
\alpha_q^2\onev^{\mathsf T}\pv + \alpha_q\beta_q\onev^{\mathsf T}\pv\notag\\
&=\alpha_q\onev^{\mathsf T}\pv,
\end{align}
where the last equality follows from $\alpha_q^2+\alpha_q\beta_q=\alpha_q(\alpha_q+\beta_q)=\alpha_q$. The power radiated at the antenna ports is $\mathbb{E}[\|\xv_q\|_2^2]=\alpha_q\|\Wm\|_{\mathrm F}^2$, and $\|\Fm\|_2\le 1$ gives $\|\Wm\|_{\mathrm F}^2=\sum_{k}p_k\|\fv_k\|_2^2\le\onev^{\mathsf T}\pv$, with equality if and only if every excited column $\fv_k$ of $\Fm$ has unit norm. Any shortfall is reflected at the input ports of the lossless network and absorbed in the matched source impedances, and is therefore part of the power the \acp{PA} supply. \ac{EE} optimization thus favors beamformers that the network passes without reflection. By~\cite[Prop.~2]{Wu2026MiLAC}, a \ac{MiLAC}-feasible $\Wm$ attaining $\|\Wm\|_{\mathrm F}^2=\onev^{\mathsf T}\pv$ has mutually orthogonal columns, a condition that fully digital beamforming does not impose.

\begin{remark}[\ac{MiLAC}-Aided versus Hybrid Digital-\ac{MiLAC}]\label{rem:hybrid_milac}
The architecture studied in this paper is \ac{MiLAC}-aided beamforming, where the \ac{MiLAC} alone performs the stream-to-antenna mapping and no baseband digital beamformer is applied beyond power allocation. A hybrid digital-\ac{MiLAC} variant that inserts a $K\times K$ baseband beamformer before the \ac{MiLAC} can realize the full flexibility of digital beamforming with only $K$ \ac{RF} chains~\cite{Wu2026MiLAC}. We follow~\cite{Wu2026MiLAC} in focusing on the \ac{MiLAC}-aided architecture: a target digital beamformer can first be optimized using standard tools, and the corresponding \ac{MiLAC}-aided parameters can then be recovered via the \ac{SVD}-based construction of~\cite[Proposition~3]{Wu2026MiLAC}. The \ac{SE} loss of the resulting \ac{MiLAC}-aided beamformer relative to fully digital beamforming diminishes as the antenna-to-stream ratio $N/K$ grows~\cite{Fang2026Performance}, a regime that coincides with the large-array setting of primary interest in this paper. Under these conditions, \ac{MiLAC}-aided beamforming captures the essential architecture-level design space, and eliminating the baseband-mixing layer further brings the practical benefit of avoiding symbol-level digital processing at the transmitter~\cite{Wu2026MiLAC,Nerini2025AnalogII,Nerini2025Capacity}.
\end{remark}

\subsection{\ac{MiLAC} Circuit Model and Beamformer Feasible Set}\label{sec:circuit}

We next characterize the structure of $\Fm$ through microwave circuit analysis~\cite{Wu2026MiLAC,Nerini2025Capacity,Nerini2025AnalogII}. Let $\bar Y_n$ denote the shunt admittance from port $n$ to ground, and let $\bar Y_{n,m}$ denote the coupling admittance between ports $n$ and $m$. The corresponding multiport admittance matrix is given by
\begin{equation}
[\mathbf{Y}_{\mathrm c}]_{n,m}
=
\begin{cases}
-\bar Y_{n,m}, & n\neq m,\\[0.3em]
\bar Y_n+\sum_{j\neq n}\bar Y_{n,j}, & n=m.
\end{cases}
\end{equation}
Physically, $\mathbf{Y}_{\mathrm c}$ relates the vector of port currents to the vector of port voltages: its $(n,m)$ entry is the current entering port $n$ per unit voltage applied at port $m$ with all other ports short-circuited~\cite{Nerini2025AnalogI}. Reciprocity makes the couplings symmetric, $\bar Y_{n,m}=\bar Y_{m,n}$, so a \ac{FC} reciprocal network carries one tunable admittance per independent entry of the symmetric matrix $\mathbf{Y}_{\mathrm c}$: $N{+}K$ shunt and $(N{+}K)(N{+}K{-}1)/2$ coupling admittances~\cite{Wu2026MiLAC}, that is $N_{\mathrm{adm}}=(N{+}K)(N{+}K{+}1)/2$ in total, or $2346$ at the default $N=64$ and $K=4$ of Section~\ref{sec:results}.

The beamforming action is most naturally described in the scattering domain, where the scattering matrix specifies how incident waves at the ports are redistributed across the network. In particular, the stream-to-antenna block of the scattering matrix gives the linear mapping from the $K$ stream excitations to the $N$ antenna signals. With reference impedance $Z_0$ and reference admittance $Y_0=Z_0^{-1}$, the scattering matrix is
\begin{equation}\label{eq:F_structure}
\Thetam_{\mathrm M}
=
\left(\Id_{N+K}+Z_0\mathbf{Y}_{\mathrm c}\right)^{-1}
\left(\Id_{N+K}-Z_0\mathbf{Y}_{\mathrm c}\right).
\end{equation}
Losslessness makes $\Thetam_{\mathrm M}$ unitary, so the total outgoing wave power equals the total incident wave power, and reciprocity makes it symmetric, so the transmission between any two ports is the same in both directions~\cite{Wu2026MiLAC}.
The corresponding stream-to-antenna block is\footnote{Strictly speaking, a factor $1/2$ appears in front of the right-hand side. Following~\cite{Nerini2025Capacity,Wu2026MiLAC}, we absorb this voltage-division factor into the effective beamformer definition in \eqref{eq:effective_beamformer}.}
\begin{equation}
\Fm
=
\left[\Thetam_{\mathrm M}\right]_{K+1:K+N,\,1:K},
\end{equation}
where the subscript selects the rows $K{+}1$ to $K{+}N$ (the antenna ports) and the columns $1$ to $K$ (the stream ports) of $\Thetam_{\mathrm M}$.

The feasible-set characterization in~\cite{Wu2026MiLAC} shows that the lossless reciprocal hardware constraints admit a compact beamforming-space representation. For completeness, we restate the relevant result.

\begin{proposition}\label{prop:fullspace_geometry}
For a lossless reciprocal \ac{MiLAC}, a pair $(\Wm,\pv)$ is beamforming-feasible if and only if
\begin{equation}\label{eq:fullspace_geometry}
\Wm^\HH \Wm \preceq \operatorname{diag}(\pv).
\end{equation}
\end{proposition}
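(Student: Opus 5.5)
The plan is to reduce the statement to a characterization of the stream-to-antenna block $\Fm$, and then translate that characterization through $\Wm=\Fm\operatorname{diag}(\sqrt{\pv})$. Concretely, I would first show that an $N\times K$ matrix $\Fm$ equals $[\Thetam_{\mathrm M}]_{K+1:K+N,1:K}$ for some unitary symmetric $\Thetam_{\mathrm M}$ arising from purely imaginary symmetric $\mathbf{Y}_{\mathrm c}$ if and only if $\|\Fm\|_2\le 1$, i.e., $\Fm^\HH\Fm\preceq\Id_K$. Given this, the two directions of \eqref{eq:fullspace_geometry} follow by congruence.

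\emph{Necessity.} If $(\Wm,\pv)$ is realized by some $\Fm$, then $\Fm$ is a submatrix of a unitary matrix, so $\Fm^\HH\Fm\preceq\Id_K$. Multiplying on both sides by $\operatorname{diag}(\sqrt{\pv})$ gives
\begin{equation*}
\Wm^\HH\Wm=\operatorname{diag}(\sqrt{\pv})\,\Fm^\HH\Fm\,\operatorname{diag}(\sqrt{\pv})\preceq\operatorname{diag}(\pv).
\end{equation*}

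\emph{Sufficiency, first step.} Given $\Wm^\HH\Wm\preceq\operatorname{diag}(\pv)$, I first build $\Fm$. For indices with $p_k>0$, set $\fv_k=\wv_k/\sqrt{p_k}$. For indices with $p_k=0$, the diagonal inequality forces $\wv_k=\zerov$, so any $\fv_k$ works, and I choose $\fv_k=\zerov$. The resulting $\Fm$ satisfies $\Fm^\HH\Fm\preceq\Id_K$: on the support of $\pv$ this follows by congruence with $\operatorname{diag}(1/\sqrt{p_k})$, and the zero columns contribute nothing.

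\emph{Sufficiency, embedding step (the main obstacle).} It remains to embed any contraction $\Fm$ as the off-diagonal block of a unitary symmetric $(N{+}K)\times(N{+}K)$ matrix that comes from a lossless reciprocal admittance. I would take the SVD $\Fm=\Um\boldsymbol{\Sigma}\Vm^\HH$ with singular values $\sigma_i\le1$. Using a Takagi-type construction, I would set
\begin{equation*}
\Thetam_{\mathrm M}=\begin{bmatrix}\Am & \Fm^\TT\\ \Fm & \Dm\end{bmatrix},
\end{equation*}
with symmetric blocks $\Am=-\bar{\Vm}\sqrt{\Id-\boldsymbol{\Sigma}^2}\,\Vm^\HH$ and $\Dm=\Um\sqrt{\Id-\boldsymbol{\Sigma}^2}\,\Um^\TT$, padded on the orthogonal complement of $\operatorname{Ran}(\Um)$ by a symmetric unitary such as $\Um_\perp\Um_\perp^\TT$. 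Unitarity then reduces to checking, for each singular pair, a $2\times2$ rotation-like block of the form $[-c,\ s;\ s,\ c]$ with $c=\sqrt{1-\sigma_i^2}$ and $s=\sigma_i$. Symmetry holds by construction.

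Finally, any unitary symmetric $\Thetam_{\mathrm M}$ with $-1$ not an eigenvalue yields $Z_0\mathbf{Y}_{\mathrm c}=(\Id+\Thetam_{\mathrm M})^{-1}(\Id-\Thetam_{\mathrm M})$. This matrix is symmetric and purely imaginary, because the Cayley transform of a symmetric unitary matrix is $j$ times a real symmetric matrix. The eigenvalue $-1$ issue, which can appear when some $\sigma_i=0$ or $\sigma_i=1$, is the delicate point. I would handle it by multiplying the padding block by a phase $e^{j\phi}$ (still symmetric unitary), or by a perturbation/limiting argument. As a fallback, I would simply invoke the feasible-set characterization of~\cite{Wu2026MiLAC}, which this proposition restates.
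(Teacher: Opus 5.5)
Your necessity argument and your construction of $\Fm$ from $(\Wm,\pv)$, including the zero-power columns, coincide with the paper's proof. The paper stops at $\|\Fm\|_2\le 1$ and takes realizability of every contraction from Proposition~1 and Corollary~1 of \cite{Wu2026MiLAC}, so your fallback is exactly the paper's route. Your explicit embedding goes beyond the paper, and its first half is correct. Set $\mathbf{T}=\operatorname{blkdiag}(\bar{\Vm},[\Um,\Um_\perp])$ and let $\mathbf{M}$ be the real symmetric orthogonal matrix assembled from the blocks $[-c_i,\,s_i;\,s_i,\,c_i]$ and $\Id_{N-K}$. Then your matrix is $\Thetam_{\mathrm M}=\mathbf{T}\mathbf{M}\mathbf{T}^\TT$, hence symmetric and unitary with off-diagonal blocks $\Fm$ and $\Fm^\TT$.

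The gap is the Cayley inversion, and the lemma you announce at the start is false as stated: not every contraction arises from a finite, purely imaginary, symmetric $\mathbf{Y}_{\mathrm c}$. For lossless $\mathbf{Y}_{\mathrm c}$ one has $\Id+\Thetam_{\mathrm M}=2(\Id+Z_0\mathbf{Y}_{\mathrm c})^{-1}$, which is always invertible, so a completion with eigenvalue $-1$ is never realizable.

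Your construction is a congruence, not a similarity. Its spectrum therefore cannot be read off the $2\times 2$ blocks, and the eigenvalue $-1$ is not confined to $\sigma_i\in\{0,1\}$. For real $\Fm$ with a real SVD, $\mathbf{T}$ is orthogonal, so $\Thetam_{\mathrm M}$ is similar to $\mathbf{M}$, whose blocks are reflections with eigenvalues $\pm 1$. Already $N=K=1$, $\Fm=1/2$ gives $\Thetam_{\mathrm M}=[-\sqrt3/2,\,1/2;\,1/2,\,\sqrt3/2]$ with $\det(\Id+\Thetam_{\mathrm M})=0$. A phase on the padding block cannot help, because in this example the padding is decoupled from the reflections, and it is absent altogether when $N=K$.

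The freedom that matters is a diagonal unitary $\boldsymbol{\Phi}$ in $\Am=\bar{\Vm}\sqrt{\Id-\boldsymbol{\Sigma}^2}\,\boldsymbol{\Phi}\Vm^\HH$ and $\Dm=-\Um\sqrt{\Id-\boldsymbol{\Sigma}^2}\,\bar{\boldsymbol{\Phi}}\Um^\TT+\Um_\perp\mathbf{S}\Um_\perp^\TT$, with $\mathbf{S}$ symmetric unitary; this is equivalent to the phase ambiguity of the SVD. In the scalar example, $\Am=\Dm=j\sqrt3/2$ works. Even with this freedom, you would still need to show that some choice makes $\Id+\Thetam_{\mathrm M}$ invertible.

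For some boundary contractions no choice works. For $\Fm=[\Id_K;\,\mathbf{0}]$, unitarity forces $\Am=\mathbf{0}$ and $\Dm=\operatorname{blkdiag}(\mathbf{0},\mathbf{S})$. Then $[\xv;\,-\xv;\,\mathbf{0}]$ is a $-1$ eigenvector of every completion, so no finite $\mathbf{Y}_{\mathrm c}$ exists. A perturbation argument only gives the closure.

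The proposition is sound at the level the paper works at, where a lossless reciprocal MiLAC is identified with a symmetric unitary $\Thetam_{\mathrm M}$. You should end your argument at the symmetric unitary embedding, or state the admittance-level converse only up to closure, rather than claim the Cayley inversion.
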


\begin{proof}
The result follows from Proposition~1 and Corollary~1 of~\cite{Wu2026MiLAC}. If $\Wm=\Fm\operatorname{diag}(\sqrt{\pv})$, then $\|\Fm\|_2\le 1$ because $\Fm$ is a submatrix of the unitary matrix $\Thetam_{\mathrm M}$ and is therefore contractive. Hence,
\begin{equation}
\Wm^\HH \Wm
=
\operatorname{diag}\bigl(\sqrt{\pv}\bigr)\Fm^\HH\Fm\operatorname{diag}\bigl(\sqrt{\pv}\bigr)
\preceq
\operatorname{diag}(\pv).
\end{equation}

Conversely, suppose that $(\Wm,\pv)$ satisfies~\eqref{eq:fullspace_geometry}. Define $\Fm=\Wm\operatorname{diag}(\sqrt{\pv^\dagger})$, where $\pv^\dagger$ denotes the elementwise pseudoinverse of $\pv$. Then $\Wm=\Fm\operatorname{diag}(\sqrt{\pv})$, and
\begin{equation}
\Fm^\HH\Fm
=
\operatorname{diag}\bigl(\sqrt{\pv^\dagger}\bigr)\Wm^\HH\Wm\operatorname{diag}\bigl(\sqrt{\pv^\dagger}\bigr)
\preceq
\Id_K,
\end{equation}
which implies $\|\Fm\|_2\le 1$. The pseudoinverse is benign when some entries of $\pv$ are zero: if $p_k=0$, then~\eqref{eq:fullspace_geometry} forces the $k$th column of $\Wm$ to be zero, so the factorization remains valid columnwise.
\end{proof}

The inequality in~\eqref{eq:fullspace_geometry} is the beamforming-space consequence of passive reciprocal hardware: \ac{MiLAC} can redistribute the injected stream excitations across the antenna array, but the resulting effective beamformer must remain compatible with the available stream powers.

\subsection{Quantization-Aware \ac{EE} Formulation}\label{sec:rate_power}

\subsubsection{Signal model and sum SE} Let $\hv_k\in\mathbb{C}^{N}$ denote the channel vector for user $k$. The received signal at user $k$ is
\begin{align}\label{eq:rx}
r_k = &\hv_k^\HH \xv_q + n_k\\
=&\alpha_q\hv_k^\HH \wv_k s_k + \underbrace{\sum_{j\neq k}\alpha_q\hv_k^\HH \wv_j s_j}_{\text{Multiuser interference}} + \underbrace{\hv_k^\HH\qv}_{\text{Distortion}}+ n_k,\notag
\end{align}
where $\wv_k\in\mathbb{C}^{N}$ denotes the $k$-th column of $\Wm$ and $n_k\sim\mathcal{CN}(0,\sigma^2)$ is the \ac{AWGN}.

Under the \ac{AQNM}, the resulting \ac{SINR} at user $k$ is
\begin{equation}\label{eq:sinr_milac}
\Gamma_k\left(\Wm\right)
=
\frac{\alpha_q^2|\hv_k^\HH \wv_k|^2}
{\alpha_q^2\sum_{j\neq k}|\hv_k^\HH \wv_j|^2 + \hv_k^\HH \Rm_q \hv_k + \sigma^2},
\end{equation}
and the corresponding quantization-aware sum \ac{SE} is
\begin{equation}\label{eq:rate_sum}
R\left(\Wm\right)
=
\sum_{k=1}^{K}\log_2\bigl(1+\Gamma_k\left(\Wm\right)\bigr).
\end{equation}

The total \ac{MiLAC} power consumption consists of the \ac{PA} supply power and the architecture-dependent circuit power.

\subsubsection{PA supply power} The \ac{PA} supply power is
\begin{equation}\label{eq:ppa}
P_{\mathrm{PA}}^{\mathrm{milac}}\left(\pv\right) = \frac{P_{\mathrm{tx}}^{\mathrm{milac}}\left(\pv\right)}{\eta_{\mathrm{PA}}},
\end{equation}
where $\eta_{\mathrm{PA}}$ denotes the \ac{PA} drain efficiency~\cite{Choi2022Energy,Bjornson2017Massive}. The $K$ \acp{PA} sit at the \ac{MiLAC} input ports, so $P_{\mathrm{PA}}^{\mathrm{milac}}$ is the supply drawn to deliver $P_{\mathrm{tx}}^{\mathrm{milac}}$ into the network. For the digital and hybrid benchmarks the \acp{PA} drive the antenna ports, where the delivered and radiated powers coincide. In every case the budget and the supply refer to the same quantity, namely the power delivered by the \acp{PA}.

\subsubsection{Static power} The pre-\ac{MiLAC} circuit power associated with the $K$ \ac{RF} chains is modeled as~\cite{Choi2022Energy}
\begin{equation}\label{eq:milac_circuit}
P_{\mathrm{stat}}^{\mathrm{circ}}
=
P_{\mathrm{LO}} + K\bigl(P_{\mathrm{RF}} + P_{\mathrm{DAC,pair}}\bigr),
\end{equation}
where $P_{\mathrm{LO}}$ is the local-oscillator power. The per-chain \ac{RF} power is decomposed as~\cite{Choi2022Energy}
\begin{equation}\label{eq:prf}
P_{\mathrm{RF}} = 2P_{\mathrm{LP}}+2P_{\mathrm{M}}+P_{\mathrm{H}},
\end{equation}
where $P_{\mathrm{LP}}$, $P_{\mathrm{M}}$, and $P_{\mathrm{H}}$ denote the powers consumed by the low-pass filters, mixers, and 90$^\circ$ hybrid with buffer, respectively. The power consumed by each \ac{DAC} pair is~\cite{Ribeiro2018Energy}
\begin{equation}\label{eq:pdac}
P_{\mathrm{DAC,pair}}
=
2\left(1.5\times 10^{-5}2^b + 9\times 10^{-12}f_s b\right),
\end{equation}
where $f_s$ denotes the system sampling rate. The bracketed term is the power of one binary-weighted current-steering \ac{DAC}, modeled as a function of its resolution and sampling rate~\cite{Cui2005Energy,Ribeiro2018Energy}, and the factor of two counts the two \acp{DAC} of one chain, as in the power models of~\cite{Ribeiro2018Energy}.

We model the static \ac{MiLAC} network power as
\begin{equation}
P_{\mathrm{stat}}^{\mathrm{milac}} = N_{\mathrm{adm}}\,P_{\mathrm{drv}},
\end{equation}
where $N_{\mathrm{adm}} = (N+K)(N+K+1)/2$
is the number of tunable admittances in a \ac{FC} reciprocal network~\cite{Shen2022Modeling,Zhou2024Optimizing}\footnote{Stem-connected architectures lower the admittance number from $\mathcal{O}((N{+}K)^2)$ to $\mathcal{O}(NK)$ with no significant sum-rate loss~\cite{Nerini2025Reduced,Zhang2026StemBF}. The saving grows with the array, from $4.5\times$ at $N=64$ to $16.5\times$ at $N=256$ for $K=4$, and reduces $P_{\mathrm{stat}}^{\mathrm{milac}}$ by the same factor. A sparser network, however, has fewer spare degrees of freedom to compensate for coarse admittance resolution, so it may need more resolution bits, and hence more drive power, per admittance. In practice, the net power saving therefore depends on both the admittance number and the required resolution. We restrict attention to the \ac{FC} topology here and leave the \ac{EE} comparison between the \ac{FC} and stem-connected topologies to future work.} and $P_{\mathrm{drv}}$ denotes the power drawn by the drive circuit of one tunable admittance. Since the admittances are lossless, they dissipate no power themselves, and this term accounts entirely for the circuitry that holds each of them at its assigned value.

\subsubsection{Total power and EE} The total power consumption is
\begin{equation}\label{eq:pmilac_tot}
P_{\mathrm{tot}}^{\mathrm{milac}}\left(\pv\right)
=
\frac{P_{\mathrm{tx}}^{\mathrm{milac}}\left(\pv\right)}{\eta_{\mathrm{PA}}}
+ P_{\mathrm{stat}}^{\mathrm{circ}}
+ P_{\mathrm{stat}}^{\mathrm{milac}}.
\end{equation}

Finally, the \ac{MiLAC} \ac{EE} is defined as
\begin{equation}\label{eq:ee_def}
\operatorname{EE}
=
\frac{B R\left(\Wm\right)}{P_{\mathrm{tot}}^{\mathrm{milac}}\left(\pv\right)}
\quad \text{(bit/J)},
\end{equation}
where $B$ denotes the system bandwidth.

Four modeling choices define the scope of this framework. First, \ac{CSI} is assumed perfect. Channel estimation schemes for \ac{MiLAC}-aided systems, operating in the analog domain, have recently become available~\cite{Zhang2026Channel,Zhang2026ChannelBF}, and quantifying the \ac{EE} cost of imperfect \ac{CSI} remains open. Second, the admittances are lossless. Nonzero conductance dissipates part of the injected power and changes the realized stream-to-antenna mapping. For the closely related beyond-diagonal \ac{RIS} (BD-RIS), built from the same kind of tunable admittance network, component-level loss models and the resulting performance degradation have been established in~\cite{Peng2026LossyBDRIS}. Third, antenna mutual coupling is neglected. Since a \ac{MiLAC} processes signals at \ac{RF}, coupling also changes the realized stream-to-antenna mapping, and a physics-compliant model is available in~\cite{Nerini2026Physics}. Fourth, the admittances are optimized as continuously tunable. Section~\ref{sec:setup} discusses the finite-resolution implementation and its drive power.

\section{Problem Formulation and Solution Structure for Quantization-Aware \ac{EE} Optimization}\label{sec:reformulation}

\subsection{Quantization-Aware \ac{EE} Optimization Problem}\label{sec:full_problem}

Combining Proposition~\ref{prop:fullspace_geometry}, the transmit-power expression in~\eqref{eq:ptx_milac}, and the \ac{EE} definition in~\eqref{eq:ee_def}, we obtain the quantization-aware \ac{EE} optimization problem
\begin{subequations}\label{eq:problem_full}
\begin{align}
    &\max_{\Wm,\pv}\quad
    \frac{B R\left(\Wm\right)}{P_{\mathrm{tot}}^{\mathrm{milac}}\left(\pv\right)} \label{eq:problem_full_obj}\\
    &\text{s.t.}\quad
    \Wm^\HH \Wm \preceq \operatorname{diag}\left(\pv\right), 
     \label{eq:problem_full_a}\\
    &\hphantom{\text{s.t.}\quad}
    \alpha_q\onev^{\mathsf T}\pv \le P_{\max}, \label{eq:problem_full_b}
\end{align}
\end{subequations}
where $P_{\max}$ denotes the maximum transmit power.

Problem~\eqref{eq:problem_full} is challenging because of its nonconvex fractional objective. Moreover, unlike conventional digital beamforming, where the sum-power constraint acts directly on the beamformer, here the budget applies to the power injected at the stream ports while the rate depends on the antenna-domain beamformer $\Wm$. The two are linked only through the feasibility condition in~\eqref{eq:problem_full_a}, so $\Wm$ and $\pv$ are coupled: raising $p_k$ enlarges the feasible set for $\Wm$ but increases the transmit power in the objective. Although the matrix inequality $\Wm^\HH \Wm \preceq \operatorname{diag}(\pv)$ is convex and can be represented as a \ac{LMI}, a direct treatment would generally lead to a \ac{SDP}-based formulation, which becomes computationally burdensome for large-scale arrays. Motivated by~\cite{Wu2026MiLAC}, we therefore seek an exact reduced-dimension reformulation of~\eqref{eq:problem_full}, which will form the basis of the low-complexity algorithm developed in the next section.

\subsection{Reduced-Dimension Reformulation}\label{sec:range_restrict}
Two reduced $K\times K$ variables are used in the rest of the paper, and we introduce them together here so that their roles are clear. This section replaces the $N\times K$ effective beamformer $\Wm$ by the variable $\Ym$, related to $\Wm$ by $\Wm=\Hm^\HH\bar{\Hm}^{-1/2}\Ym\operatorname{diag}(\sqrt{\pv})$ in Proposition~\ref{prop:reduced_param}, where $\bar{\Hm}=\Hm\Hm^\HH$, and the \ac{EE} algorithm of Section~\ref{sec:lc_solver} works entirely with $(\Ym,\pv)$. Section~\ref{sec:continuation} uses a second variable $\Xm$, related to $\Wm$ by $\Wm=\Hm^\HH\Xm$ and chosen because it makes the weighted-sum tradeoff objective amenable to convex approximation.

\looseness=-1 Let $\Hm\in\mathbb{C}^{K\times N}$ denote the channel matrix, whose $k$th row is $\hv_k^\HH$.
Define the orthogonal projector\footnote{We assume $K \le N$ and $\operatorname{rank}(\Hm)=K$, so that $\Hm\Hm^\HH$ is invertible. This condition is typically satisfied in practice. If $\operatorname{rank}(\Hm)<K$, the same development carries over by replacing the inverse with the Moore--Penrose pseudoinverse, which is omitted here for brevity.}
\begin{equation}\label{eq:proj}
    \Pim_{\Hm} = \Hm^\HH(\Hm\Hm^\HH)^{-1}\Hm.
\end{equation}
The following lemma shows that the feasible set of $\Wm$ can be restricted to $\operatorname{Ran}(\Hm^\HH)$, the column space of $\Hm^\HH$, without loss of optimality.

\begin{lemma}\label{lem:range_restrict}
For any feasible pair $(\Wm,\pv)$ of~\eqref{eq:problem_full}, let $\Wm_{\parallel}=\Pim_{\Hm}\Wm$ denote the projection of $\Wm$ onto $\operatorname{Ran}(\Hm^\HH)$. Then $(\Wm_{\parallel},\pv)$ is feasible, satisfies $\Hm\Wm_{\parallel}=\Hm\Wm$, and attains the same transmit power
\begin{equation}\label{eq:power_projection}
    P_{\mathrm{tx}}^{\mathrm{milac}}(\pv)
    =
    \alpha_q\onev^{\mathsf T}\pv .
\end{equation}
Hence, an optimum of~\eqref{eq:problem_full} always exists in $\operatorname{Ran}(\Hm^\HH)$.
\end{lemma}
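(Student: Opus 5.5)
The argument checks the three claimed properties in turn and then shows that the objective is unchanged, which gives the optimality statement.

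\emph{Channel invariance.} Because $\Hm\Pim_{\Hm}=\Hm\Hm^\HH(\Hm\Hm^\HH)^{-1}\Hm=\Hm$, we have
\begin{equation*}
\Hm\Wm_{\parallel}=\Hm\Pim_{\Hm}\Wm=\Hm\Wm .
\end{equation*}
Consequently $\hv_k^\HH\wv_{\parallel,j}=\hv_k^\HH\wv_j$ for all $k$ and $j$. This leaves the signal and multiuser-interference terms of~\eqref{eq:sinr_milac} unchanged.

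\emph{Distortion term.} The distortion term also depends on $\Wm$ only through $\Hm\Wm$. By~\eqref{eq:rq_milac},
\begin{equation*}
\hv_k^\HH\Rm_q\hv_k=\alpha_q\beta_q\|\Wm^\HH\hv_k\|^2=\alpha_q\beta_q\sum_j|\hv_k^\HH\wv_j|^2 .
\end{equation*}
So $\Gamma_k(\Wm_{\parallel})=\Gamma_k(\Wm)$ for every $k$, and hence $R(\Wm_{\parallel})=R(\Wm)$. This is exactly where the column-space structure of $\Rm_q$ emphasized after~\eqref{eq:rq_milac} is used. Under a diagonal per-antenna distortion covariance, this step would fail.

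\emph{Feasibility.} $\Pim_{\Hm}$ is a Hermitian idempotent matrix, so $\Pim_{\Hm}\preceq\Id_N$. Therefore
\begin{equation*}
\Wm_{\parallel}^\HH\Wm_{\parallel}=\Wm^\HH\Pim_{\Hm}^\HH\Pim_{\Hm}\Wm=\Wm^\HH\Pim_{\Hm}\Wm\preceq\Wm^\HH\Wm\preceq\operatorname{diag}(\pv),
\end{equation*}
which is~\eqref{eq:problem_full_a} by Proposition~\ref{prop:fullspace_geometry}. The vector $\pv$ is untouched, so~\eqref{eq:problem_full_b} still holds.

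\emph{Transmit power.} By~\eqref{eq:ptx_milac}, the transmit power is $\alpha_q\onev^{\mathsf T}\pv$, which depends only on $\pv$. This gives~\eqref{eq:power_projection}, and $P_{\mathrm{tot}}^{\mathrm{milac}}(\pv)$ is unchanged as well.

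\emph{Optimality.} The EE objective of~\eqref{eq:problem_full} therefore takes the same value at $(\Wm_{\parallel},\pv)$ as at $(\Wm,\pv)$. Applying the projection to any optimal pair yields an optimal pair in $\operatorname{Ran}(\Hm^\HH)$. If existence of an optimum must itself be justified, I would argue as follows. The feasible set is compact, since $\pv$ is bounded by $P_{\max}/\alpha_q$ and $\|\Wm\|_{\mathrm F}^2\le\onev^{\mathsf T}\pv$. The objective is continuous, because the denominator is bounded below by the positive static power. A maximizer then exists by Weierstrass, and projecting it gives one in the range.

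The only delicate point is the distortion term in the second step. It must be written explicitly in terms of $\Wm^\HH\hv_k$ so that it is seen to be invariant under the projection.
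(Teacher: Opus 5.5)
Your proof is correct and follows the same route as the paper. Channel invariance comes from $\Hm\Pim_{\Hm}=\Hm$, feasibility from the congruence bound $\Wm^\HH\Pim_{\Hm}\Wm\preceq\Wm^\HH\Wm\preceq\operatorname{diag}(\pv)$, and power invariance from $P_{\mathrm{tx}}^{\mathrm{milac}}$ depending on $\pv$ alone; the only differences are that you put the distortion-term invariance inside the proof (the paper gives it in Remark~\ref{rem:rowspace_advantage}) and you add a compactness/Weierstrass argument for existence of an optimum, which the paper takes for granted.
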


\begin{proof}
Decompose $\Wm$ as
\begin{equation}
\Wm=\Pim_{\Hm}\Wm+\left(\Id-\Pim_{\Hm}\right)\Wm.
\end{equation}
The first term lies in $\operatorname{Ran}(\Hm^\HH)$, whereas the second lies in $\operatorname{Null}(\Hm)$. Since the null-space component is annihilated by $\Hm$, replacing $\Wm$ with $\Wm_{\parallel}=\Pim_{\Hm}\Wm$ leaves $\Hm\Wm$ unchanged and therefore preserves the desired signal and multiuser interference terms. The transmit power depends on $\pv$ alone and is therefore unaffected by the projection. Finally, feasibility with respect to~\eqref{eq:problem_full_a} is preserved because
\begin{equation}
\Wm_{\parallel}^\HH \Wm_{\parallel}
\preceq
\Wm^\HH\Wm
\preceq
\operatorname{diag}\left(\pv\right),
\end{equation}
which completes the proof.
\end{proof}

\begin{remark}[\ac{MiLAC}-Aided Beamforming Structural Advantage]\label{rem:rowspace_advantage}
The range restriction in Lemma~\ref{lem:range_restrict} relies on a structural feature that is specific to quantized \ac{MiLAC}-aided beamforming. Since the \acp{DAC} operate in the stream domain, the distortion observed at user $k$ is
\begin{equation}\label{eq:distortion_milac}
    \hv_k^\HH \Rm_q \hv_k
    =
    \alpha_q\beta_q \sum_{j=1}^{K} |C_{k,j}|^2,
\end{equation}
where $\Cm=\Hm\Wm\in\mathbb{C}^{K\times K}$ is the effective coupling matrix with $(k,j)$th entry $C_{k,j}$. Thus, the distortion depends only on $\Cm$, which is preserved under the projection $\Wm \mapsto \Pim_{\Hm}\Wm$. Consequently, the useful signal, the multiuser interference, and the quantization distortion remain unchanged, while the transmit power can only decrease.

This property does \emph{not} generally hold for digital or hybrid beamforming benchmarks. In digital beamforming, quantization is applied independently to the $N$ antenna-domain signals, which yields a diagonal distortion covariance of the form $\Rm_{q,\mathrm{dig}}=\alpha_q\beta_q\operatorname{diag}(\Wm\Wm^\HH)$. Although the projection preserves $\Hm\Wm$, it generally alters the row norms of $\Wm$ and therefore changes the user-side distortion. A similar obstruction arises in hybrid beamforming. As a result, \ac{MiLAC}-aided beamforming admits an exact $K\times K$ reduced-dimension representation regardless of \ac{DAC} resolution.
\end{remark}

Lemma~\ref{lem:range_restrict} shows that the effective beamformer in any optimal feasible pair can be confined to $\operatorname{Ran}(\Hm^\HH)$. We next combine this structural result with Proposition~\ref{prop:fullspace_geometry} to derive an exact reduced-dimension parameterization. Let $\bar{\Hm}=\Hm\Hm^\HH\in\mathbb{C}^{K\times K}$ denote the channel Gram matrix.

\begin{proposition}\label{prop:reduced_param}
Assume that $\Hm$ has full row rank. For any pair $(\Wm,\pv)$ satisfying~\eqref{eq:fullspace_geometry} with $\Wm \in \operatorname{Ran}(\Hm^\HH)$, there exists $\Ym \in \mathbb{C}^{K\times K}$ such that
\begin{equation}\label{eq:milac_param}
    \Wm
    =
    \Hm^\HH \bar{\Hm}^{-1/2}\Ym \operatorname{diag}\left(\sqrt{\pv}\right),
    \quad
    \left\|\Ym\right\|_2 \le 1.
\end{equation}
Conversely, any pair $(\Ym,\pv)$ satisfying $\|\Ym\|_2\le 1$ and $\pv \succeq \zerov$ generates, through~\eqref{eq:milac_param}, a pair $(\Wm,\pv)$ that satisfies~\eqref{eq:fullspace_geometry}.
\end{proposition}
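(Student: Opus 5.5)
The plan rests on one observation: the matrix $\Um=\Hm^\HH\bar{\Hm}^{-1/2}\in\mathbb{C}^{N\times K}$ has orthonormal columns, since
\begin{equation}
\Um^\HH\Um=\bar{\Hm}^{-1/2}\Hm\Hm^\HH\bar{\Hm}^{-1/2}=\Id_K,
\end{equation}
and its columns span $\operatorname{Ran}(\Hm^\HH)$. This follows because $\bar{\Hm}^{-1/2}$ is invertible, so $\operatorname{Ran}(\Um)=\operatorname{Ran}(\Hm^\HH)$. Every $\Wm\in\operatorname{Ran}(\Hm^\HH)$ can therefore be written uniquely as $\Wm=\Um\Zm$ with $\Zm=\Um^\HH\Wm\in\mathbb{C}^{K\times K}$. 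Moreover $\Wm^\HH\Wm=\Zm^\HH\Zm$, so condition~\eqref{eq:fullspace_geometry} is equivalent to $\Zm^\HH\Zm\preceq\operatorname{diag}(\pv)$. The task then reduces to the $K\times K$ version of Proposition~\ref{prop:fullspace_geometry}.

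For the forward direction, I will reuse the factorization from the proof of Proposition~\ref{prop:fullspace_geometry}. Set $\Ym=\Zm\operatorname{diag}(\sqrt{\pv^\dagger})$, with $\pv^\dagger$ the elementwise pseudoinverse. Then $\Ym^\HH\Ym\preceq\operatorname{diag}(\sqrt{\pv^\dagger})\operatorname{diag}(\pv)\operatorname{diag}(\sqrt{\pv^\dagger})$, which is a diagonal $0/1$ matrix and hence $\preceq\Id_K$. This gives $\|\Ym\|_2\le 1$.

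The only delicate point is recovering $\Zm=\Ym\operatorname{diag}(\sqrt{\pv})$ when some $p_k=0$. The argument used for Proposition~\ref{prop:fullspace_geometry} applies here. If $p_k=0$, the $(k,k)$ entry of $\Zm^\HH\Zm\preceq\operatorname{diag}(\pv)$ gives $\|\zv_k\|^2\le 0$, so the $k$th column of $\Zm$ vanishes. The columnwise identity $\zv_k=\yv_k\sqrt{p_k}$ therefore holds for every $k$. Multiplying on the left by $\Um$ yields~\eqref{eq:milac_param}.

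For the converse, take $\|\Ym\|_2\le1$ and $\pv\succeq\zerov$, and define $\Wm$ by~\eqref{eq:milac_param}. Using $\Um^\HH\Um=\Id_K$,
\begin{equation}
\Wm^\HH\Wm=\operatorname{diag}(\sqrt{\pv})\Ym^\HH\Ym\operatorname{diag}(\sqrt{\pv})\preceq\operatorname{diag}(\sqrt{\pv})\operatorname{diag}(\sqrt{\pv})=\operatorname{diag}(\pv),
\end{equation}
where the inequality uses $\Ym^\HH\Ym\preceq\Id_K$ and the fact that congruence preserves the Loewner order. I expect no real obstacle: the main work is checking the semi-unitarity of $\Um$, and the zero-power case is handled exactly as before.
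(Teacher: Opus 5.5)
Your proof is correct and is essentially the paper's argument: your $\Zm=\Um^\HH\Wm$ equals $\bar{\Hm}^{1/2}\Xm$ for the paper's $\Wm=\Hm^\HH\Xm$, so you construct exactly the same $\Ym=\bar{\Hm}^{1/2}\Xm\operatorname{diag}(\sqrt{\pv^\dagger})$, and your converse is identical to the paper's. The differences are cosmetic: you make the semi-unitarity of $\Hm^\HH\bar{\Hm}^{-1/2}$ explicit and eliminate zero-power columns directly from $\Zm^\HH\Zm\preceq\operatorname{diag}(\pv)$, whereas the paper first obtains $\wv_k=\zerov$ and then uses the full column rank of $\Hm^\HH$.
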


\begin{proof}
Since $\Wm \in \operatorname{Ran}(\Hm^\HH)$, there exists $\Xm \in \mathbb{C}^{K\times K}$ such that $\Wm=\Hm^\HH\Xm$. Define
\begin{equation}
\Ym=\bar{\Hm}^{1/2}\Xm \operatorname{diag}(\sqrt{\pv^\dagger}),
\end{equation}
where $\pv^\dagger$ denotes the elementwise pseudoinverse of $\pv$. If $p_k=0$, then~\eqref{eq:fullspace_geometry} implies $\wv_k=\zerov$. Since $\Hm^\HH$ has full column rank, this further implies $\xv_k=\zerov$, so the above definition remains well posed even for zero-power columns. Rearranging gives
\begin{equation}
\Xm=\bar{\Hm}^{-1/2}\Ym \operatorname{diag}\left(\sqrt{\pv}\right),
\end{equation}
and hence~\eqref{eq:milac_param}. Moreover,
\begin{equation}
\begin{aligned}
\Ym^\HH\Ym
&=
\operatorname{diag}(\sqrt{\pv^\dagger})\Xm^\HH\bar{\Hm}\Xm
\operatorname{diag}(\sqrt{\pv^\dagger}) \\
&=
\operatorname{diag}(\sqrt{\pv^\dagger})\Wm^\HH\Wm
\operatorname{diag}(\sqrt{\pv^\dagger})
\preceq \Id_K,
\end{aligned}
\end{equation}
where the last step follows from~\eqref{eq:fullspace_geometry}. Therefore, $\|\Ym\|_2\le 1$.

Conversely, let $\Wm$ be generated by~\eqref{eq:milac_param} with $\|\Ym\|_2\le 1$. Then we have
\begin{equation}
\Wm^\HH\Wm
=
\operatorname{diag}\left(\sqrt{\pv}\right)\Ym^\HH\Ym\operatorname{diag}\left(\sqrt{\pv}\right)
\preceq
\operatorname{diag}\left(\pv\right),
\end{equation}
which is exactly~\eqref{eq:fullspace_geometry}.
\end{proof}

By Proposition~\ref{prop:reduced_param}, \eqref{eq:problem_full} can be equivalently rewritten as
\begin{subequations}\label{eq:problem_reduced}
\begin{align}
    &\max_{\Ym,\pv}\quad
    \frac{B R\left(\Ym,\pv\right)}{P_{\mathrm{tot}}^{\mathrm{milac}}\left(\pv\right)} \label{eq:problem_reduced_obj}\\
    &\text{s.t.}\quad
    \|\Ym\|_2 \le 1, \label{eq:problem_reduced_a}\\
    &\hphantom{\text{s.t.}\quad}
    \alpha_q\onev^{\mathsf T}\pv \le P_{\max}, \label{eq:problem_reduced_b}
\end{align}
\end{subequations}
where the parameterization $\Wm = \Hm^\HH \bar{\Hm}^{-1/2}\Ym \operatorname{diag}(\sqrt{\pv})$ in~\eqref{eq:milac_param} yields the following reduced-coordinate identities as
\begin{equation}\label{eq:coupling}
    \Cm = \Hm\Wm = \bar{\Hm}^{1/2}\Ym \operatorname{diag}\left(\sqrt{\pv}\right),
\end{equation}
and
\begin{equation}\label{eq:ptx_reduced}
    P_{\mathrm{tx}}^{\mathrm{milac}}\left(\pv\right)
    =
    \alpha_q\onev^{\mathsf T}\pv .
\end{equation}

\begin{proposition}\label{prop:exact_equiv}
The reduced-dimension \eqref{eq:problem_reduced} is an exact reformulation of the original quantization-aware \ac{EE} optimization for \ac{MiLAC}-aided beamforming problem~\eqref{eq:problem_full}. In particular, the passage from the full-space variables $(\Wm,\pv)$ to the reduced coordinates $(\Ym,\pv)$ introduces neither surrogate rate expressions nor surrogate power quantities.
\end{proposition}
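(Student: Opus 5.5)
To establish exact equivalence, I will show that the map $\Phi:(\Ym,\pv)\mapsto(\Wm,\pv)$ with $\Wm=\Hm^\HH\bar{\Hm}^{-1/2}\Ym\operatorname{diag}(\sqrt{\pv})$ sends the feasible set of~\eqref{eq:problem_reduced} into the feasible set of~\eqref{eq:problem_full}. I will also show that its image contains an optimal solution of~\eqref{eq:problem_full}, and that the objective value is preserved pointwise: $R(\Ym,\pv)=R(\Wm)$ and $P_{\mathrm{tot}}^{\mathrm{milac}}$ is unchanged. Together these give equal optimal values, and optimal solutions correspond under $\Phi$.

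\textbf{Feasibility in both directions.} The converse part of Proposition~\ref{prop:reduced_param} shows that any $(\Ym,\pv)$ with $\|\Ym\|_2\le 1$ yields $\Wm^\HH\Wm\preceq\operatorname{diag}(\pv)$. The budget constraint $\alpha_q\onev^{\mathsf T}\pv\le P_{\max}$ involves only $\pv$, which $\Phi$ leaves unchanged, so $\Phi$ maps reduced-feasible points to full-feasible points. For the other direction, take any feasible $(\Wm,\pv)$. Lemma~\ref{lem:range_restrict} replaces it by $(\Pim_{\Hm}\Wm,\pv)$, which is still feasible, lies in $\operatorname{Ran}(\Hm^\HH)$, and has the same $\Hm\Wm$ and the same $\pv$. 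The direct part of Proposition~\ref{prop:reduced_param} then produces a $\Ym$ with $\|\Ym\|_2\le1$ whose image under $\Phi$ is exactly this projected pair.

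\textbf{Objective invariance (the main step).} I will show that $R(\Wm)$ depends on $\Wm$ only through $\Cm=\Hm\Wm$. The numerator and interference terms of~\eqref{eq:sinr_milac} are $|C_{k,k}|^2$ and $|C_{k,j}|^2$. The distortion term is $\hv_k^\HH\Rm_q\hv_k=\alpha_q\beta_q\hv_k^\HH\Wm\Wm^\HH\hv_k=\alpha_q\beta_q\sum_j|C_{k,j}|^2$, using~\eqref{eq:rq_milac} and~\eqref{eq:distortion_milac}. Hence every $\Gamma_k$ is a function of $\Cm$ alone. Because Lemma~\ref{lem:range_restrict} preserves $\Cm$, the projection step does not change $R$. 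On the reduced side, substituting the parameterization gives $\Cm=\bar{\Hm}^{1/2}\Ym\operatorname{diag}(\sqrt{\pv})$, as in~\eqref{eq:coupling}. Therefore $R(\Ym,\pv)$ is literally $R(\Wm)$ evaluated at $\Wm=\Phi(\Ym,\pv)$; it is the same rate function rewritten, not a bound or an approximation. The power side is simpler: by~\eqref{eq:ptx_milac} and~\eqref{eq:ptx_reduced}, $P_{\mathrm{tx}}^{\mathrm{milac}}=\alpha_q\onev^{\mathsf T}\pv$, and the static terms are constants, so $P_{\mathrm{tot}}^{\mathrm{milac}}(\pv)$ is identical in both problems.

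\textbf{Conclusion.} Combining the two parts, every feasible $(\Wm,\pv)$ has a reduced counterpart with the same EE, and every reduced point maps to a full point with the same EE. The suprema therefore coincide, and optimal solutions map to each other via $\Phi$ and via $\Pim_{\Hm}$ followed by the construction of Proposition~\ref{prop:reduced_param}. The step most likely to need care is the distortion term. Its reduction to a function of $\Cm$ is exactly what rules out any surrogate rate expression, and it depends on the structure $\Rm_q\propto\Wm\Wm^\HH$. A second point to check is columns with $p_k=0$. There the pseudoinverse argument of Proposition~\ref{prop:reduced_param} applies, and the unspecified columns of $\Ym$ do not affect $\Cm$, so the objective is unaffected.
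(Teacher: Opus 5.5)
Your proposal is correct and follows the paper's route. Lemma~\ref{lem:range_restrict} confines $\Wm$ to $\operatorname{Ran}(\Hm^\HH)$, Proposition~\ref{prop:reduced_param} passes between $(\Wm,\pv)$ and $(\Ym,\pv)$, and the objective is preserved because the rate depends only on $\Cm=\Hm\Wm$ and the power only on $\pv$. Your version is slightly more careful than the paper's in two places: you explicitly verify that the distortion term $\hv_k^\HH\Rm_q\hv_k$ is a function of $\Cm$ (the paper's proof mentions only the signal and interference terms and leaves this to Remark~\ref{rem:rowspace_advantage}), and your surjection-plus-pointwise-equality argument avoids the paper's ``one-to-one correspondence'' claim, which fails when some $p_k=0$ because the corresponding column of $\Ym$ is then arbitrary.
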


\begin{proof}
By Lemma~\ref{lem:range_restrict}, the $\Wm$ component within an optimum of~\eqref{eq:problem_full} exists in $\operatorname{Ran}(\Hm^\HH)$. Proposition~\ref{prop:reduced_param} then establishes a one-to-one correspondence between feasible points in that subspace and feasible reduced-variable pairs $(\Ym,\pv)$ through~\eqref{eq:milac_param}. The useful signal and interference terms depend only on $\Hm\Wm$, which is reproduced exactly by~\eqref{eq:coupling}, while the post-quantized transmit power is preserved exactly through~\eqref{eq:ptx_reduced}. Therefore, both the objective and the constraints are transferred without approximation.
\end{proof}

\section{Low-Complexity Solution for Quantization-Aware \ac{EE} Optimization}\label{sec:lc_solver}

\subsection{Dinkelbach Reformulation}\label{sec:dinkelbach}

To handle the fractional objective in~\eqref{eq:problem_reduced}, we adopt the Dinkelbach method~\cite{Dinkelbach1967}. This yields a double-loop algorithm. In the outer loop, the Dinkelbach parameter $\lambda$ is updated according to the current \ac{EE} value. In the inner loop, for a given $\lambda$, we solve the corresponding subtractive-form problem in $(\Ym,\pv)$.

For any feasible $(\Ym,\pv)$, the total power can be written as
\begin{equation}\label{eq:ptot_split}
P_{\mathrm{tot}}^{\mathrm{milac}}\left(\pv\right)
=
P_{\mathrm{st}}^{\mathrm{milac}}
+
\frac{P_{\mathrm{tx}}^{\mathrm{milac}}\left(\pv\right)}{\eta_{\mathrm{PA}}},
\end{equation}
where $P_{\mathrm{st}}^{\mathrm{milac}}$ collects all static, signal-independent power terms. Therefore, for a given $\lambda \ge 0$, the Dinkelbach inner problem becomes
\begin{subequations}\label{eq:dinkelbach}
\begin{align}
    &\max_{\Ym,\pv}\quad
    B R\left(\Ym,\pv\right)
    -
    \gamma(\lambda) P_{\mathrm{tx}}^{\mathrm{milac}}\left(\pv\right) \label{eq:dk_obj}\\
    &\text{s.t.}\quad
    \|\Ym\|_2 \le 1, \label{eq:dk_a}\\
    &\hphantom{\text{s.t.}\quad}
    \alpha_q\onev^{\mathsf T}\pv \le P_{\max}, \label{eq:dk_b}
\end{align}
\end{subequations}
where $\gamma(\lambda)=\frac{\lambda}{\eta_{\mathrm{PA}}}$.
The constant term $\lambda P_{\mathrm{st}}^{\mathrm{milac}}$ is omitted from~\eqref{eq:dinkelbach}, since it does not affect the inner-loop optimizer.

After solving~\eqref{eq:dinkelbach}, the outer-loop parameter is updated as~\cite{Dinkelbach1967}
\begin{equation}\label{dinkel_para}
\lambda
=
\frac{B R\left(\Ym,\pv\right)}{P_{\mathrm{tot}}^{\mathrm{milac}}\left(\pv\right)},
\end{equation}
with $R(\Ym,\pv)$ and $P_{\mathrm{tot}}^{\mathrm{milac}}(\pv)$ evaluated at the current inner-loop solution.

Problem~\eqref{eq:dinkelbach} remains nonconvex because $\Ym$ and $\pv$ are coupled in both the rate term and the transmit-power term. We therefore solve it by \ac{WMMSE}-based alternating optimization.

\subsection{WMMSE Solution to the Inner Problem}\label{sec:wmmse}

Define
\begin{equation}\label{eq:Z_def}
\Zm = \bar{\Hm}^{1/2}\Ym,
\end{equation}
so that, by~\eqref{eq:coupling},
\begin{equation}\label{eq:C_from_Z}
\Cm=\Zm\operatorname{diag}(\sqrt{\pv}).
\end{equation}
Let $u_k\in\mathbb{C}$ denote the scalar receive filter for user $k$. The quantization-aware mean-square error at user $k$ is
\begin{equation}\label{eq:mse}
\begin{aligned}
e_k\left(\Ym,\pv\right)
&=
\left|1-\alpha_q u_k C_{k,k}\right|^2
+\alpha_q^2|u_k|^2\sum_{j\neq k}|C_{k,j}|^2 \\
&\quad
+\alpha_q\beta_q|u_k|^2\sum_{j=1}^{K}|C_{k,j}|^2
+\sigma^2|u_k|^2.
\end{aligned}
\end{equation}

Introduce positive weights $\omega_k$, and collect them together with $u_k$ into $\uv=[u_1,\ldots,u_K]^\TT$ and $\omegav=[\omega_1,\ldots,\omega_K]^\TT$.
Using the standard \ac{WMMSE} identity with natural logarithm~\cite{Shi2011Iteratively},
\begin{equation}\label{eq:wmmse_identity}
\min_{u_k,\omega_k>0}\bigl(\omega_k e_k\left(\Ym,\pv\right)-\ln\omega_k\bigr)
=
1-\ln\left(1+\Gamma_k\left(\Ym,\pv\right)\right),
\end{equation}
we transform \eqref{eq:dinkelbach} equivalently into
\begin{subequations}\label{eq:wmmse_problem}
\begin{align}
&\min_{\Ym,\pv,\uv,\omegav}\quad
\frac{B}{\ln 2}\sum_{k=1}^{K}\bigl(\omega_k e_k\left(\Ym,\pv\right)-\ln\omega_k\bigr)
+\gamma(\lambda)P_{\mathrm{tx}}^{\mathrm{milac}}\left(\pv\right) \notag\\
&\text{s.t.}\quad
\|\Ym\|_2\le 1, \label{eq:w_a}\\
&\hphantom{\text{s.t.}\quad}
\alpha_q\onev^{\mathsf T}\pv\le P_{\max}. \label{eq:w_b}
\end{align}
\end{subequations}
The additive constant $-BK/\ln 2$ has been omitted, since it does not affect the optimizer.

We solve~\eqref{eq:wmmse_problem} by alternating minimization over the four variable blocks $(\uv,\omegav,\pv,\Ym)$.

\subsubsection{Update of $u_k$ and $\omega_k$}

For fixed $(\Ym,\pv)$, the problem decouples across users. Minimizing~\eqref{eq:mse} with respect to $u_k$ gives the quantization-aware MMSE receiver
\begin{equation}\label{eq:u_update}
u_k
=
\frac{\alpha_q C_{k,k}^{\ast}}
{\alpha_q\sum_{j=1}^{K}|C_{k,j}|^2+\sigma^2},
\end{equation}
where we used $\alpha_q^2+\alpha_q\beta_q=\alpha_q$. Then, for fixed $u_k$, minimizing with respect to $\omega_k>0$ yields
\begin{equation}\label{eq:omega_update}
\omega_k=\frac{1}{e_k\left(\Ym,\pv\right)},
\end{equation}
where $e_k(\Ym,\pv)$ is evaluated using the updated $u_k$.

\subsubsection{Update of $\pv$}

For fixed $(\Ym,\uv,\omegav)$, and since $C_{i,k}=Z_{i,k}\sqrt{p_k}$, all $\pv$-dependent terms share a common positive factor $\alpha_q$, which can be dropped without affecting the minimizer. Define
\begin{align}
a_k &= \frac{B}{\ln 2}\,\omega_k\Re\{u_k Z_{k,k}\}, \label{eq:a_k}\\
b_k &= \frac{B}{\ln 2}\sum_{i=1}^{K}\omega_i|u_i|^2|Z_{i,k}|^2. \label{eq:b_k}
\end{align}
Then the $\pv$-subproblem becomes
\begin{subequations}\label{eq:p_subprob}
\begin{align}
&\min_{\pv}\quad
\sum_{k=1}^{K}
\left[
\bigl(b_k+\gamma(\lambda)\bigr)p_k
-2a_k\sqrt{p_k}
\right] \label{eq:p_subprob_obj}\\
&\text{s.t.}\quad
\alpha_q\sum_{k=1}^{K} p_k \le P_{\max}. \label{eq:p_subprob_c}
\end{align}
\end{subequations}
The update is obtained from the \ac{KKT} conditions as
\begin{equation}\label{eq:p_update}
p_k
=
\left[
\frac{a_k}
{b_k+\gamma(\lambda)+\mu_p}
\right]_+^2,
\end{equation}
where $\mu_p\ge 0$ is the dual variable associated with~\eqref{eq:p_subprob_c}. Here, $\mu_p$ is understood as the rescaled multiplier after absorbing the factor $\alpha_q$ from the constraint. It can be found by bisection, and $\mu_p=0$ if the unconstrained solution already satisfies~\eqref{eq:p_subprob_c}.

\subsubsection{Update of $\Ym$}

For fixed $(\pv,\uv,\omegav)$, all $\Ym$-dependent terms in the objective share a common positive factor $\alpha_q$, which can again be dropped. The transmit-power budget~\eqref{eq:w_b} constrains $\pv$ alone, so the only constraint on $\Ym$ is the spectral-norm cap $\|\Ym\|_2 \le 1$ inherited from Proposition~\ref{prop:reduced_param}. The resulting $\Ym$-subproblem is
\begin{subequations}\label{Y_sub_prob}
\begin{align}
&\min_{\Ym}\quad f(\Ym) \label{eq:Y_sub_prob_obj}\\
&\text{s.t.}\quad \|\Ym\|_2 \le 1. \label{eq:Y_sub_prob_c}
\end{align}
\end{subequations}
where
\begin{equation}
    f\left(\Ym\right) 
    =
    \operatorname{tr}\Big(\Ym^\HH \Qm \Ym \operatorname{diag}(\pv)\Big)
    -
    2\Re\left\{
    \operatorname{tr}\Big(\Lm \Ym \operatorname{diag}(\sqrt{\pv})\Big)
    \right\},\notag
\end{equation}
with
\begin{align}
\Qm
&=
\bar{\Hm}^{1/2}\operatorname{diag}(\zv)\bar{\Hm}^{1/2}, \label{eq:Qmat}\\
\zv
&=
\frac{B}{\ln 2}\,\omegav\odot|\uv|^{\odot 2}, \label{eq:zvec}\\
\Lm
&=
\operatorname{diag}\left(\frac{B}{\ln 2}\,\omegav\odot\uv\right)\bar{\Hm}^{1/2}. \label{eq:Lmat}
\end{align}
$\Qm$ captures the weighted interference-plus-distortion contribution induced by the \ac{WMMSE} variables.

To solve~\eqref{Y_sub_prob}, we adopt \ac{PGD}. The Wirtinger gradient of $f(\Ym)$ is
\begin{equation}\label{eq:gradY}
\nabla_{\Ym}f\left(\Ym\right)
=
2\Qm\Ym\operatorname{diag}\left(\pv\right)
-
2\Lm^\HH\operatorname{diag}\left(\sqrt{\pv}\right).
\end{equation}
The feasible set of~\eqref{Y_sub_prob} is the spectral-norm ball $\mathcal{C}=\{\Ym:\|\Ym\|_2\le 1\}$, which is closed, convex, and compact. Each \ac{PGD} step with step size $\tau>0$ takes a gradient step and projects onto $\mathcal{C}$. Let the \ac{SVD} of the gradient step be
\begin{equation}\label{eq:Y_svd}
\Ym-\tau\nabla_{\Ym}f
=
\Um\boldsymbol{\Sigma}\Vm^\HH,
\end{equation}
where $\boldsymbol{\Sigma}=\operatorname{diag}(\sigma_1,\ldots,\sigma_K)$. The Euclidean projection onto $\mathcal{C}$ clips every singular value to one as
\begin{equation}\label{eq:Ystep}
\Ym
\leftarrow
\Pi_{\mathcal{C}}\bigl(\Ym-\tau\nabla_{\Ym}f\bigr)
=
\Um\,\min(\boldsymbol{\Sigma},\Id_K)\,\Vm^\HH.
\end{equation}

\begin{proposition}[Convergence of the $\Ym$-update]\label{prop:Y_update_convergence}
For fixed $(\pv,\uv,\omegav)$, the objective $f(\Ym)$ in~\eqref{Y_sub_prob} is convex, and its Wirtinger gradient is Lipschitz continuous under the Frobenius norm with constant
\begin{equation}\label{eq:Lf}
L_f = 2\left\|\Qm\right\|_2\left\|\operatorname{diag}\left(\pv\right)\right\|_2.
\end{equation}
For any $\tau\in(0,1/L_f]$, the update~\eqref{eq:Ystep} converges to a global minimizer of $f(\Ym)$ over $\mathcal{C}$. In particular, the compactness of $\mathcal{C}$ guarantees existence of at least one such minimizer.
\end{proposition}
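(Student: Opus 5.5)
The argument has three parts: convexity of $f$, the Lipschitz bound on its gradient, and a standard projected-gradient convergence theorem. We view $f$ as a real-valued function of $\Ym$, identified with $\mathbb{R}^{2K^2}$ under the inner product $\langle \Am,\Bm\rangle=\Re\{\operatorname{tr}(\Am^\HH\Bm)\}$.

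\textbf{Convexity.} Write $\Dm=\operatorname{diag}(\pv)\succeq\zerov$. Then
\begin{equation}
\operatorname{tr}(\Ym^\HH\Qm\Ym\Dm)=\operatorname{tr}\bigl((\Dm^{1/2}\Ym^\HH)\Qm(\Ym\Dm^{1/2})\bigr)=\|\Qm^{1/2}\Ym\Dm^{1/2}\|_{\mathrm F}^2 .
\end{equation}
This is a squared norm of a linear map of $\Ym$. It is well defined because $\Qm=\bar{\Hm}^{1/2}\operatorname{diag}(\zv)\bar{\Hm}^{1/2}\succeq\zerov$, which holds since $\zv\succeq\zerov$ by~\eqref{eq:zvec} and $\omega_k>0$. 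Hence the quadratic term is convex. The second term, $-2\Re\{\operatorname{tr}(\Lm\Ym\Dm^{1/2})\}$, is real-linear in $\Ym$. So $f$ is convex.

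\textbf{Lipschitz gradient.} We use the gradient~\eqref{eq:gradY}. The linear part cancels in differences, so
\begin{equation}
\nabla f(\Ym_1)-\nabla f(\Ym_2)=2\Qm(\Ym_1-\Ym_2)\Dm .
\end{equation}
Applying $\|\Am\Bm\Cm\|_{\mathrm F}\le\|\Am\|_2\|\Bm\|_{\mathrm F}\|\Cm\|_2$ gives the bound $2\|\Qm\|_2\|\Dm\|_2\|\Ym_1-\Ym_2\|_{\mathrm F}$, which is exactly $L_f$ in~\eqref{eq:Lf}.

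One point needs care: the gradient convention. The Wirtinger gradient $\partial f/\partial\Ym^\ast$ differs from the real gradient by a factor of $2$. We should check that~\eqref{eq:gradY}, used as the step direction, is the real gradient of $f$ in the identification above. Then the descent lemma $f(\Ym')\le f(\Ym)+\langle\nabla f,\Ym'-\Ym\rangle+\tfrac{L_f}{2}\|\Ym'-\Ym\|_{\mathrm F}^2$ holds with the stated constant. If the factor came out differently, $L_f$ and the step-size range would scale by that factor, but the argument would be unchanged.

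\textbf{Convergence.} $\mathcal{C}$ is closed and convex, since it is a norm ball. It is also bounded, hence compact. The map $\Pi_{\mathcal{C}}$ in~\eqref{eq:Ystep} is the exact Euclidean (Frobenius) projection onto $\mathcal{C}$. This follows from the von Neumann trace inequality, or equivalently from unitary invariance of the Frobenius norm: projecting onto the spectral-norm ball reduces to clipping the singular values. Because $f$ is continuous and $\mathcal{C}$ is compact, a minimizer exists by Weierstrass. We then invoke the standard result for projected gradient descent on an $L$-smooth convex function with step $\tau\in(0,1/L]$. Each step satisfies the sufficient-decrease inequality $f(\Ym^{t+1})\le f(\Ym^t)-\tfrac{1}{2\tau}\|\Ym^{t+1}-\Ym^t\|_{\mathrm F}^2$. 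Moreover, $\Ym^t$ is Fejér monotone with respect to the set of minimizers, and $f(\Ym^t)-f^\star=\mathcal{O}(1/t)$. Fejér monotonicity together with the fact that every cluster point is a minimizer (a fixed point of the projected-gradient map) gives convergence of the whole sequence to a global minimizer.

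The main obstacle is the degenerate case $\|\Qm\|_2\|\Dm\|_2=0$, where $L_f=0$. There we would note that $f$ is linear. Any $\tau>0$ then yields convergence: the projection step reaches the boundary minimizer of the linear objective, which is the polar factor of $-\nabla f$, within finitely many steps. Alternatively, we simply interpret $1/L_f$ as $+\infty$.
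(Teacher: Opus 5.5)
Your proposal is correct and follows the paper's route. Your squared-norm form $\|\Qm^{1/2}\Ym\Dm^{1/2}\|_{\mathrm F}^2$ is the same fact as the paper's Kronecker Hessian $2(\Dm^{\TT}\otimes\Qm)\succeq\zerov$, the Lipschitz bound is identical, and you spell out (sufficient decrease, Fej\'er monotonicity) the standard projected-gradient theorem that the paper only cites. Your convention check also succeeds: \eqref{eq:gradY} equals $2\,\partial f/\partial\Ym^{\ast}$, which is exactly the real gradient for the inner product $\Re\{\operatorname{tr}(\Am^\HH\Bm)\}$, so the stated $L_f$ and step range are right.

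One small correction concerns your degenerate case. The claim that projected gradient on a linear objective terminates in finitely many steps is false in general: minimizing $\Re\{y\}$ over the unit disk from $y=\mathrm{i}$, every iterate keeps a positive imaginary part and never reaches $-1$. The point is moot here, though. $L_f=0$ forces either $\pv=\zerov$, or $\Qm=\zerov$, which with $\bar{\Hm}$ invertible and $\omega_k>0$ gives $\uv=\zerov$ and hence $\Lm=\zerov$. In both cases $f\equiv 0$.
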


\begin{algorithm}[t]
\caption{Low-Complexity Inner Solution}
\label{algo:inner}
\footnotesize
\begin{algorithmic}[1]
\State \textbf{Input:} $\bar{\Hm}$, $P_{\max}$, noise variance $\sigma^2$, Dinkelbach coefficient $\gamma(\lambda)$, bandwidth $B$, tolerance $\epsilon_{\mathrm{in}} > 0$.
\State Initialize $(\Ym^{(0)},\pv^{(0)})$ from a feasible starting point.
\Repeat
\State Update MMSE receivers $\uv$ via~\eqref{eq:u_update} and weights $\omegav$ via~\eqref{eq:omega_update}.
\State Update $\pv$ via~\eqref{eq:p_update} with bisection on $\mu_p$.
\State Form $\Qm$, $\Lm$ from~\eqref{eq:Qmat}--\eqref{eq:Lmat} and solve the $\Ym$-subproblem~\eqref{Y_sub_prob} to convergence via \ac{PGD}~\eqref{eq:Ystep}.
\State Reconstruct $\Wm$ from~\eqref{eq:milac_param} and reproject only if residual numerical power violation remains.
\Until{$|g^{(n)} - g^{(n-1)}|/|g^{(n-1)}| \le \epsilon_{\mathrm{in}}$, where $g^{(n)}$ is the \ac{WMMSE} objective~\eqref{eq:wmmse_problem} at iteration $n$}
\State \textbf{Return} $(\Ym,\pv)$.
\end{algorithmic}
\end{algorithm}

\begin{algorithm}[t]
\caption{Quantization-Aware \ac{EE} Optimization Algorithm for \ac{MiLAC}-Aided MU-MISO Beamforming}
\label{algo:ee_overall}
\footnotesize
\begin{algorithmic}[1]
\State \textbf{Input:} channel $\Hm$, power budget $P_{\max}$, \ac{MiLAC} power parameters, tolerance $\epsilon_{\mathrm{out}}>0$.
\State Compute $\bar{\Hm}=\Hm\Hm^\HH$ and its matrix square root $\bar{\Hm}^{1/2}$. 
\State Initialize $(\Ym^{(0)},\pv^{(0)})$ to a feasible point and set $\lambda^{(0)} \leftarrow B R(\Ym^{(0)},\pv^{(0)}) / P_{\mathrm{tot}}^{\mathrm{milac}}(\pv^{(0)})$.
\For{$n=1,2,\ldots$}
\State Solve the inner problem~\eqref{eq:dinkelbach} via Algorithm~\ref{algo:inner} and obtain $(\Ym,\pv)$.
\State Update $\lambda^{(n)} \leftarrow B R(\Ym,\pv) \big/ P_{\mathrm{tot}}^{\mathrm{milac}}(\pv)$.
\If{$|\lambda^{(n)} - \lambda^{(n-1)}|/\lambda^{(n-1)} \le \epsilon_{\mathrm{out}}$}
\State \textbf{break}
\EndIf
\EndFor
\State Recover $\Wm$ from~\eqref{eq:milac_param}.
\State \textbf{Return} $(\Wm, \pv)$ and $\mathrm{EE}^\star = \lambda^{(n)}$.
\end{algorithmic}
\end{algorithm}

\begin{proof}
Define $\Dm=\operatorname{diag}(\pv)$.

\emph{Convexity:} 
Vectorizing $\Ym$ in the objective $f(\Ym)$ yields
\begin{equation}
\begin{aligned}
f\left(\Ym\right)
=&
\operatorname{vec}\left(\Ym\right)^\HH
\left(\Dm^{\TT}\otimes \Qm\right)
\operatorname{vec}\left(\Ym\right)\\
&-
2\Re\left\{
\operatorname{vec}\left(\Lm^\HH \Dm^{1/2}\right)^\HH
\operatorname{vec}\left(\Ym\right)
\right\},    
\end{aligned}
\end{equation}
up to an irrelevant constant. Hence the Hessian of $f(\Ym)$ with respect to $\operatorname{vec}(\Ym)$ is $\nabla^2 f(\Ym)=2(\Dm^{\TT}\otimes\Qm)\succeq\zerov$, because the Kronecker product of two positive semidefinite matrices is positive semidefinite. Therefore, $f(\Ym)$ is convex in $\Ym$.

\emph{Lipschitz gradient:} For any $\Ym_1,\Ym_2$,
\begin{align}
\left\|\nabla_{\Ym} f\left(\Ym_1\right)-\nabla_{\Ym} f\left(\Ym_2\right)\right\|_{\mathrm F}
&=
2\left\|\Qm\left(\Ym_1-\Ym_2\right)\Dm\right\|_{\mathrm F} \\
&\le
2\left\|\Qm\right\|_2\,\left\|\Dm\right\|_2\,\left\|\Ym_1-\Ym_2\right\|_{\mathrm F},\notag
\end{align}
confirming $L_f = 2\|\Qm\|_2\|\Dm\|_2$ as stated in~\eqref{eq:Lf}.

\emph{Convergence:} The set $\mathcal{C}$ is closed, convex, and compact. The update~\eqref{eq:Ystep}, obtained by singular-value clipping, is the exact Euclidean projection onto $\mathcal{C}$. Euclidean projections onto closed convex sets are firmly nonexpansive~\cite{Bertsekas1999Nonlinear}. Since $f(\Ym)$ is convex with $L_f$-Lipschitz gradient and $\mathcal{C}$ is compact and convex, the sequence converges to a global minimizer of $f(\Ym)$ over $\mathcal{C}$ for any $\tau\in(0,1/L_f]$~\cite{Bertsekas1999Nonlinear}.
\end{proof}

\subsection{Overall Algorithm, Convergence, and Complexity}\label{sec:algo_summary}

We summarize the complete solution developed in Sections~\ref{sec:reformulation}-\ref{sec:wmmse}. 
The inner loop is detailed in Algorithm~\ref{algo:inner} while the complete \ac{EE} optimizer wrapping the Dinkelbach outer iteration around it is given in Algorithm~\ref{algo:ee_overall}.

\begin{remark}[Overall Convergence]\label{rem:convergence}
Algorithm~\ref{algo:ee_overall} features two nested loops whose convergence properties are as follows.

\emph{Inner loop (Algorithm~\ref{algo:inner}):}
The receiver and weight updates~\eqref{eq:u_update}-\eqref{eq:omega_update} are globally optimal for fixed beamformers, and the power update~\eqref{eq:p_update} solves its separable convex subproblem in closed form. Both provide monotone descent on the \ac{WMMSE} objective~\eqref{eq:wmmse_problem}. The $\Ym$-update (Proposition~\ref{prop:Y_update_convergence}) solves its convex subproblem over $\mathcal{C}$ to global optimality via \ac{PGD} while joint power-budget feasibility is maintained by the $\pv$-update at every cycle. Therefore, the alternating minimization converges to a stationary point of~\eqref{eq:wmmse_problem}~\cite{Bertsekas1999Nonlinear}.

\emph{Outer loop (Dinkelbach).}
The Dinkelbach parameter $\lambda^{(n)}$ equals the \ac{EE} achieved by the inner-loop solution at iteration $n$. Because Algorithm~\ref{algo:inner} returns a point whose subtractive-form value is nonnegative (the previous iterate is always feasible), the sequence $\{\lambda^{(n)}\}$ is monotonically nondecreasing and upper-bounded by the finite global \ac{EE} optimum. Therefore, $\{\lambda^{(n)}\}$ converges~\cite{Dinkelbach1967}. The limit point is guaranteed to be a stationary point of the original fractional \ac{EE} problem~\eqref{eq:problem_reduced}.
\end{remark}

\begin{remark}[Complexity Comparison]\label{rem:complexity}
The overall per-channel complexity of Algorithm~\ref{algo:ee_overall} is $\mathcal{O}(NK^2 + T_D T_{M} T_{\mathrm{pgd}}K^3)$, where $T_D$, $T_{M}$, and $T_{\mathrm{pgd}}$ denote the Dinkelbach, \ac{WMMSE} alternation, and \ac{PGD} iterations, respectively. The leading $NK^2$ term accounts for the one-time Gram-matrix computation $\bar{\Hm}=\Hm\Hm^\HH$. Because the reduced parameterization (Proposition~\ref{prop:reduced_param}) replaces the $N\times K$ beamformer $\Wm$ with the $K\times K$ variable $\Ym$, every iterative operation, the $\Ym$-update \ac{SVD}~\eqref{eq:Ystep} at $\mathcal{O}(K^3)$, the receiver, weight, and power updates each at $\mathcal{O}(K^2)$, scales exclusively with $K$ and is independent of the antenna number $N$. Without exploiting this reduction, we would have to solve the problem in the original $N\times K$ beamformer space, incurring $\mathcal{O}(N^3)$ or higher per-iteration cost (if \ac{SDP} is invoked). This complexity reduction is made possible by the structural property of \ac{MiLAC}-aided beamforming identified in Remark~\ref{rem:rowspace_advantage}, which \emph{does not} generally hold for digital or hybrid beamforming benchmarks, so the same reduction is not exact for their quantization-aware problems, which are therefore solved in the full $N$-dimensional space. Concretely, the quantization-aware benchmark solvers of Section~\ref{sec:setup} share the Dinkelbach-\ac{WMMSE} structure of Algorithm~\ref{algo:ee_overall} but update their beamformers in that full space: the digital update solves the regularized $N\times N$ linear systems of the classical \ac{WMMSE} beamformer step~\cite{Shi2011Iteratively}, and the hybrid analog-precoder update is a \ac{PGD} step on the constant-modulus manifold~\cite{Tranter2017Fast}, whose step size is set by the spectral norm of an $N\times N$ weighted channel Gram matrix, the counterpart of $L_f$ in Proposition~\ref{prop:Y_update_convergence}. Both therefore cost $\mathcal{O}(N^3)$ per iteration, in addition to an $\mathcal{O}(N_{\mathrm{RF}}^3)$ baseband update for the hybrids.
\end{remark}

\begin{remark}[Initialization and Sensitivity]\label{rem:init_sens}
Algorithms~\ref{algo:inner} and~\ref{algo:ee_overall} start from $\pv^{(0)}=\frac{P_{\max}}{\alpha_q K}\onev$ and $\Ym^{(0)}=\bar{\Hm}^{1/2}/\|\bar{\Hm}^{1/2}\|_2$, which satisfy~\eqref{eq:problem_reduced_a}-\eqref{eq:problem_reduced_b} by construction. Since~\eqref{eq:problem_reduced} is nonconvex, different initial pairs can converge to different stationary points. In our experiments this dependence is mild: the \ac{EE} reached from this pair stays close on average to the best value found by random feasible restarts, and the \ac{EE} ordering of the four architectures reported in Section~\ref{sec:results} is the same whether Algorithm~\ref{algo:ee_overall} starts from this pair or from the best restart.
\end{remark}

\section{\ac{SE}-\ac{EE} Tradeoff Characterization}\label{sec:continuation}

\subsection{Multi-Objective Formulation and Weighted-Sum Approach}\label{sec:se_ee_range}

The \ac{EE}-maximizing operating point from Section~\ref{sec:lc_solver} does not reveal how much \ac{SE} is sacrificed to obtain the \ac{EE} gain. Characterizing this tradeoff is fundamentally a \ac{MOO} problem~\cite{Ehrgott2005Multicriteria}, in which \ac{SE} and \ac{EE} are jointly maximized over the same \ac{MiLAC}-aided beamforming feasible set, expressed as
\begin{subequations}\label{eq:moo}
\begin{align}
    &\max_{\Ym,\pv}\quad
    \bigl[R\left(\Ym,\pv\right),\;\operatorname{EE}\left(\Ym,\pv\right)\bigr] \label{eq:moo_obj}\\
    &\text{s.t.}\quad
    \|\Ym\|_2 \le 1, \\
    &\hphantom{\text{s.t.}\quad}
    \alpha_q\onev^{\mathsf T}\pv \le P_{\max}.
\end{align}
\end{subequations}
In the moderate-to-high power regime, the two objectives conflict: higher \ac{SE} demands more transmit power, which reduces \ac{EE}~\cite{Amin2016SEEE,Zhou2022RSMA}.

A standard approach to trace the Pareto boundary of~\eqref{eq:moo} is the weighted-sum method~\cite{Ehrgott2005Multicriteria,Amin2016SEEE,Zhou2022RSMA}. By Proposition~3.9 of~\cite{Ehrgott2005Multicriteria}, the optimal solution of the corresponding weighted-sum problem is weakly Pareto optimal for the original \ac{MOO} problem. To make the two objectives commensurable, each is normalized by its own endpoint value, yielding
\begin{subequations}\label{eq:tradeoff_utility}
\begin{align}
    &\max_{\Ym,\pv}\quad
    \eta \frac{R\left(\Ym,\pv\right)}{R_{\mathrm{ref}}}
    +
    (1-\eta)
    \frac{\operatorname{EE}\left(\Ym,\pv\right)}{\operatorname{EE}_{\mathrm{ref}}} \\
    &\text{s.t.}\quad
    \|\Ym\|_2 \le 1, \\
    &\hphantom{\text{s.t.}\quad}
    \alpha_q\onev^{\mathsf T}\pv \le P_{\max},
\end{align}
\end{subequations}
where $\eta \in [0,1]$ is the weight, $R_{\mathrm{ref}} = R_{\eta=1}$ is the \ac{SE}-only endpoint, and $\operatorname{EE}_{\mathrm{ref}} = \operatorname{EE}_{\eta=0}$ is the \ac{EE}-only endpoint of the architecture. Setting $\eta=0$ recovers the \ac{EE} problem of Section~\ref{sec:lc_solver}, $\eta=1$ gives the \ac{SE} problem, and $0<\eta<1$ generates interior tradeoff points. By sweeping $\eta$ over a grid, one obtains a set of weakly Pareto optimal points whose image in the raw $(R,\operatorname{EE})$ plane describes the supported portion of the tradeoff boundary. Non-supported (concave-region) Pareto points, if any, are not recoverable by the weighted-sum method~\cite{Ehrgott2005Multicriteria}.

\subsection{Auxiliary-Variable Reformulation and \ac{SCA} Solver}\label{sec:aux_reform}

The bilinear couplings in $(\Ym,\pv)$ prevent a jointly convex surrogate. Following the proof of Proposition~\ref{prop:reduced_param}, we let $\Wm=\Hm^\HH\Xm$, where $\Xm \in \mathbb{C}^{K \times K}$. This parameterization is lossless by Lemma~\ref{lem:range_restrict} and linearizes the coupling matrix to 
\begin{equation}
\Cm=\bar{\Hm}\Xm,	
\end{equation}
rendering $e_k$ and the feasibility \ac{LMI} $\Xm^\HH\bar{\Hm}\Xm\preceq\operatorname{diag}(\pv)$ convex in $(\Xm,\pv)$, while the transmit power $\alpha_q\onev^{\mathsf T}\pv$ is linear in $\pv$. We then introduce auxiliary variables to decompose the remaining nonconvexity into two scalar constraints, each admitting a valid minorizer.

Introducing auxiliary variables $r,t,x,y\ge 0$, \eqref{eq:tradeoff_utility} is equivalently written as
\begin{subequations}\label{eq:tradeoff_lifted}
\begin{align}
    &\max_{\Xm,\pv,r,t,x,y}\quad
    \eta \frac{r}{R_{\mathrm{ref}}} + (1-\eta)\frac{t}{\operatorname{EE}_{\mathrm{ref}}}
    \label{eq:tradeoff_lifted_obj}\\
    &\text{s.t.}\quad
    r \le R(\Xm), \label{eq:tradeoff_r}\\
    &\hphantom{\text{s.t.}\quad}
    t \le \frac{B\,x^2}{y}, \label{eq:tradeoff_t}\\
    &\hphantom{\text{s.t.}\quad}
    x^2 \le r, \label{eq:tradeoff_x}\\
    &\hphantom{\text{s.t.}\quad}
    \Xm^\HH\bar{\Hm}\Xm \preceq \operatorname{diag}(\pv), \label{eq:tradeoff_y}\\
    &\hphantom{\text{s.t.}\quad}
    \frac{\alpha_q\onev^{\mathsf T}\pv}{\eta_{\mathrm{PA}}}+P_{\mathrm{st}}^{\mathrm{milac}} \le y,\; \alpha_q\onev^{\mathsf T}\pv \le P_{\max}.\label{eq:tradeoff_feas}
\end{align}
\end{subequations}
At any optimum, constraints~\eqref{eq:tradeoff_r}--\eqref{eq:tradeoff_y} are active, recovering~\eqref{eq:tradeoff_utility}. The only nonconvex elements are the rate constraint~\eqref{eq:tradeoff_r} and the fractional constraint~\eqref{eq:tradeoff_t}, each of which admits a valid minorizer.

\subsubsection{Rate minorizer}
Applying the \ac{WMMSE} identity~\eqref{eq:wmmse_identity} at any $(\uv^{(n)},\omegav^{(n)})$ with $\omega_k^{(n)}>0$ gives
\begin{align}\label{eq:rate_minorizer}
R(\Xm) &\ge \tilde{R}\bigl(\Xm\bigr)\\
    & = \frac{1}{\ln 2}\sum_{k=1}^{K}\Bigl[\ln \omega_k^{(n)} + 1 - \omega_k^{(n)} e_k\bigl(\Xm\bigr)\Bigr],	\notag
\end{align}
with equality at the optimal values~\eqref{eq:u_update}--\eqref{eq:omega_update}. Because $e_k(\Xm)$ is convex quadratic in $\Xm$, $\tilde{R}(\Xm)$ is concave.

\subsubsection{Fractional minorizer}
The map $(x,y)\mapsto x^2/y$ is jointly convex, so its first-order expansion is a global lower bound,
\begin{equation}\label{eq:xy_minorizer}
    \frac{x^2}{y} \ge \phi^{(n)}(x,y) = \frac{2 x^{(n)}}{y^{(n)}}x - \biggl(\frac{x^{(n)}}{y^{(n)}}\biggr)^{\!2} y,
\end{equation}
with equality at $(x,y)=(x^{(n)},y^{(n)})$.

\looseness=-1 \subsubsection{Per-weight \ac{SCA} subproblem}
Replacing the right-hand side of~\eqref{eq:tradeoff_r} by $\tilde{R}$ and that of~\eqref{eq:tradeoff_t} by $B\phi^{(n)}$ yields
\begin{subequations}\label{eq:tradeoff_sca_sub}
\begin{align}
    &\max_{\Xm,\pv,r,t,x,y}\quad
    \eta \frac{r}{R_{\mathrm{ref}}} + (1-\eta)\frac{t}{\operatorname{EE}_{\mathrm{ref}}} \\
    &\text{s.t.}\quad
    r \le \tilde{R}\bigl(\Xm\bigr), \label{eq:sca_r}\\
    &\hphantom{\text{s.t.}\quad}
    t \le B\,\phi^{(n)}(x,y), \label{eq:sca_t}\\
    &\hphantom{\text{s.t.}\quad}
    x^2 \le r, \label{eq:sca_x}\\
    &\hphantom{\text{s.t.}\quad}
    \Xm^\HH\bar{\Hm}\Xm \preceq \operatorname{diag}(\pv), \label{eq:sca_y}\\
    &\hphantom{\text{s.t.}\quad}
    \frac{\alpha_q\onev^{\mathsf T}\pv}{\eta_{\mathrm{PA}}}+P_{\mathrm{st}}^{\mathrm{milac}} \le y,\; \alpha_q\onev^{\mathsf T}\pv \le P_{\max}.
\end{align}
\end{subequations}
All constraints of~\eqref{eq:tradeoff_sca_sub} are convex and the objective is linear, so the subproblem is a convex program.

\subsubsection{Overall solver and convergence}
For each fixed $\eta$, the per-weight solver maintains an incumbent $(\Xm^{(n)},\pv^{(n)},r^{(n)},t^{(n)},x^{(n)},y^{(n)})$ and cycles through three updates: \emph{(B1)} update $(\uv^{(n)},\omegav^{(n)})$ to the \ac{WMMSE}-optimal values~\eqref{eq:u_update}--\eqref{eq:omega_update} at $\Xm^{(n)}$. The rate minorizer~\eqref{eq:rate_minorizer} becomes tight at $\Xm^{(n)}$, giving $\tilde{R}(\Xm^{(n)})=R(\Xm^{(n)})$, \emph{(B2)} take the incumbent $(x,y)$ as the expansion point $(x^{(n)},y^{(n)})$ in~\eqref{eq:xy_minorizer}, so that $\phi^{(n)}$ is tight at the current iterate, and \emph{(B3)} solve the convex program~\eqref{eq:tradeoff_sca_sub} and set the new iterate to its maximizer.

Both minorizers are tight at the current iterate, so the incumbent is always feasible for~\eqref{eq:tradeoff_sca_sub}. Because the subproblem inner-approximates~\eqref{eq:tradeoff_lifted}, feasibility of the original problem is preserved and the weighted-sum objective is monotonically non-decreasing. Boundedness then guarantees convergence of the objective sequence. The tightness and gradient consistency of the two minorizers further ensure that every limit point satisfies the \ac{KKT} conditions of~\eqref{eq:tradeoff_utility}~\cite[Appendix~A]{Zhou2022RSMA}. The effective beamformer is recovered as $\Wm=\Hm^\HH\Xm$.

\begin{algorithm}[t]
\caption{Weighted-Sum \ac{SE}-\ac{EE} Tradeoff Boundary Solver}
\label{algo:tradeoff}
\footnotesize
\begin{algorithmic}[1]
\State \textbf{Input:} channel $\Hm$, power budget $P_{\max}$, \ac{MiLAC} power parameters, weight grid $\eta \in \{0,0.05,\ldots,0.95,1\}$, tolerance $\epsilon_{\mathrm{out}}>0$.
\State Solve the \ac{EE}-only ($\eta=0$) problem by Algorithm~\ref{algo:ee_overall}, and the \ac{SE}-only ($\eta=1$) problem by its inner solver, Algorithm~\ref{algo:inner}, with the Dinkelbach parameter $\lambda=0$, so that the subtractive objective reduces to the rate.
\State Set references $R_{\mathrm{ref}} \leftarrow R_{\eta=1}$ and $\operatorname{EE}_{\mathrm{ref}} \leftarrow \operatorname{EE}_{\eta=0}$.
\For{each interior $\eta$ on the ordered grid}
\State Initialize $(\Xm^{(0)},\pv^{(0)})$ from a feasible point, set $r^{(0)}=R(\Xm^{(0)})$, $x^{(0)}=\sqrt{r^{(0)}}$, $y^{(0)}=P_{\mathrm{tot}}^{\mathrm{milac}}(\pv^{(0)})$, and $t^{(0)}=B(x^{(0)})^2/y^{(0)}$.
\Repeat
\State Update $(\uv^{(n)},\omegav^{(n)})$ via~\eqref{eq:u_update}--\eqref{eq:omega_update} at the current $\Xm^{(n)}$.
\State Set the fractional-minorizer reference $(x^{(n)},y^{(n)})$ to the incumbent values of $(x,y)$.
\State Solve the convex program~\eqref{eq:tradeoff_sca_sub} by an interior-point method to obtain the new $(\Xm,\pv,r,t,x,y)$.
\Until{$|U_\eta^{(n)} - U_\eta^{(n-1)}|/|U_\eta^{(n-1)}|\le \epsilon_{\mathrm{out}}$, where $U_\eta^{(n)}$ is the value of~\eqref{eq:tradeoff_lifted_obj} at the new iterate.}
\State Recover $\Wm_\eta = \Hm^\HH \Xm$ and record the achieved $(R_\eta, \operatorname{EE}_\eta)$.
\EndFor
\State \textbf{Return} $\{(R_\eta, \operatorname{EE}_\eta)\}_\eta$ and the \ac{SE}-\ac{EE} tradeoff boundary.
\end{algorithmic}
\end{algorithm}

\begin{remark}[Implementation Details]\label{rem:implementation}
Two practical measures improve the quality of the traced tradeoff boundary. First, each interior $\eta$ is warm-started from the converged solution at the nearest previously computed weight, which places the initial iterate near a good stationary branch. Second, after a full sweep of the $\eta$ grid, $R_{\mathrm{ref}}$ and $\operatorname{EE}_{\mathrm{ref}}$ are recalibrated from the best \ac{SE} and \ac{EE} values actually attained, and the sweep is repeated for a small number of passes until the boundary stabilizes. Both measures preserve the per-weight monotone-ascent and \ac{KKT} convergence properties established above.
\end{remark}

\begin{figure}[t]
    \centering
    \subfloat[Selected BS and scene layout of the fixed \texttt{asu\_campus\_3p5} deployment.]{
        \includegraphics[width=0.96\linewidth]{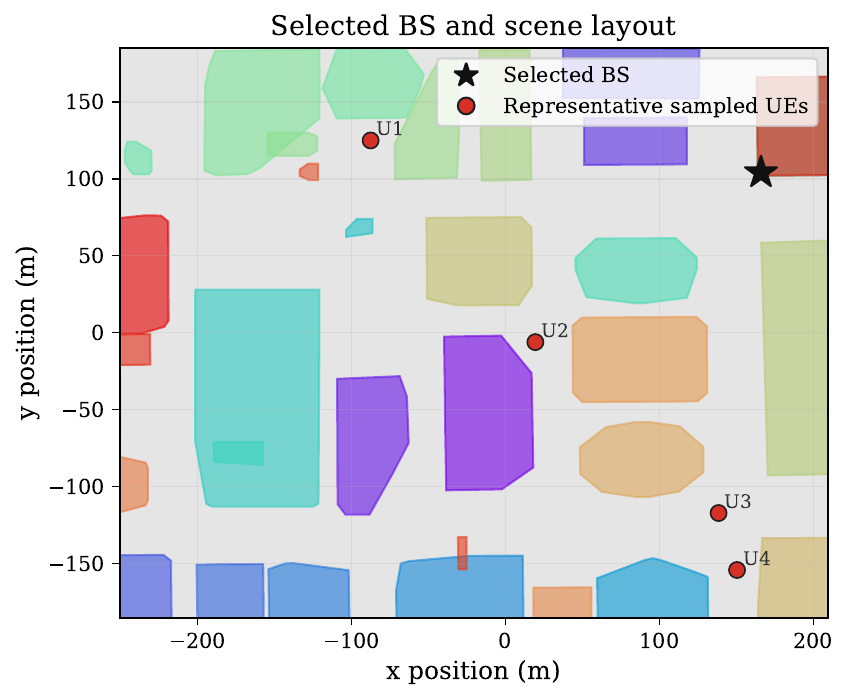}
    }

    \subfloat[Selected-TX path-loss coverage with the representative sampled users highlighted.]{
        \includegraphics[width=0.96\linewidth]{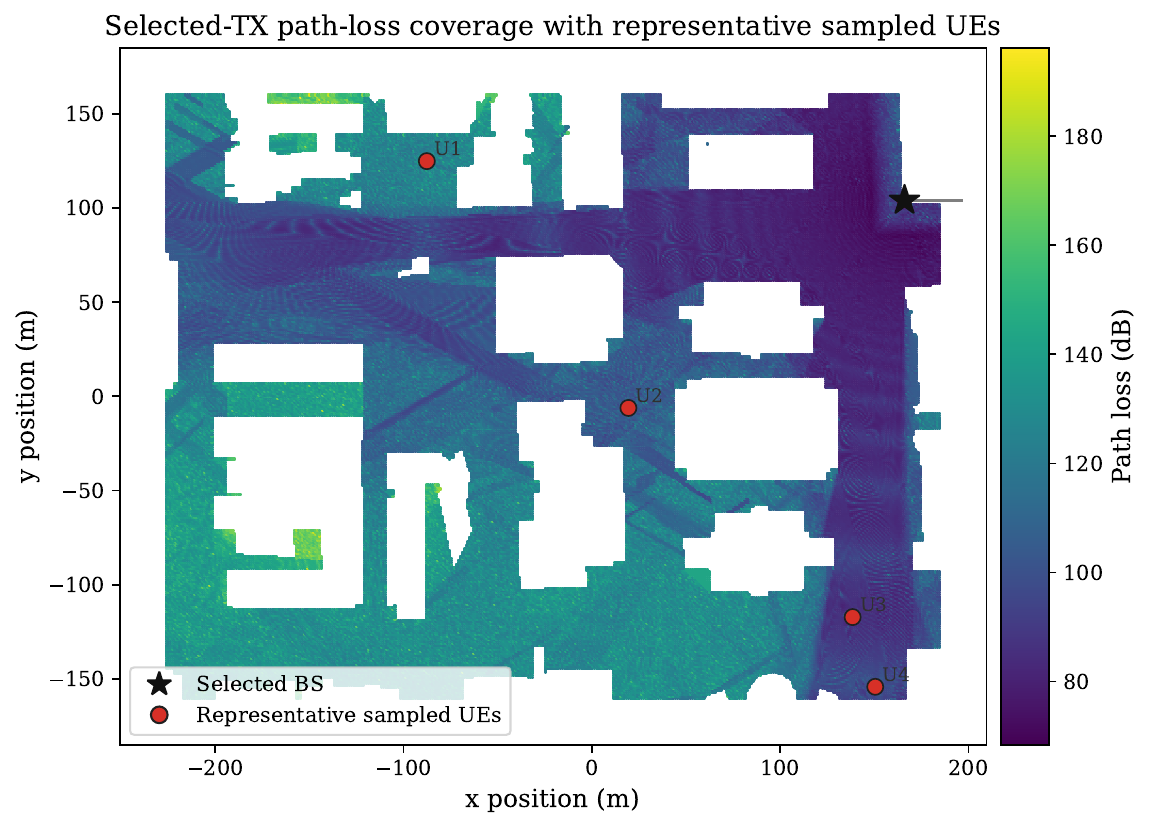}
    }
    \caption{Two-panel illustration of the adopted DeepMIMO simulation setting.}
    \label{fig:deepmimo_scene}
\end{figure}

\looseness=-1 \section{Numerical Results}\label{sec:results}
\subsection{Simulation Setup and Benchmarks}\label{sec:setup}
All results are obtained under a fixed DeepMIMO v4 deployment (\texttt{asu\_campus\_3p5})~\cite{DeepMIMOv4Website} and the transmitter-side power model of Section~\ref{sec:system_model}, as visualized in Fig.~\ref{fig:deepmimo_scene}. The raw channels include \ac{LoS}/\ac{NLoS} structure, path-loss variation, and valid-path filtering. One center subcarrier (index~$256$ of~$512$ at $100$~MHz bandwidth) is selected as the representative narrowband channel. The scenario contains a single \ac{BS}. Each Monte Carlo trial draws $K$ users at random, without replacement, from the $85157$-user valid pool under a fixed seed, the same draws are shared by all architectures, and the channel gains are rescaled so that the noise variance is one, which preserves all signal-to-noise ratios.

\begin{table}[t]
\caption{Default parameters.}
\label{tab:defaults}
\centering
\footnotesize
\begin{tabular}{l c}
\toprule
Parameter & Value \\
\midrule
Carrier frequency & $3.5$~GHz \\
Number of antennas $N$ & $64$ \\
Number of users $K$ & $4$ \\
Hybrid \ac{RF} chains $N_{\mathrm{RF}}$ & $4$ \\
Transmit power budget $P_{\max}$ & $30~\mathrm{dBm}$ \\
Bandwidth/sampling rate $B=f_s$ & $100~\mathrm{MHz}$~\cite{Choi2022Energy} \\
Noise figure & $5~\mathrm{dB}$~\cite{Choi2022Energy} \\
\ac{DAC} bits $b_{\mathrm{DAC}}$ & $4$ \\
\ac{PA} efficiency $\eta_{\mathrm{PA}}$ & $0.27$~\cite{Choi2022Energy,Ribeiro2018Energy} \\
\ac{RF} chain-related $P_{\mathrm{LP}}, P_{\mathrm{M}}, P_{\mathrm{H}}$ & $14,\,0.3,\,3~\mathrm{mW}$~\cite{Choi2022Energy} \\
Local oscillator $P_{\mathrm{LO}}$ & $22.5~\mathrm{mW}$~\cite{Choi2022Energy} \\
Phase shifter power $P_{\mathrm{PS}}$ & $21.6~\mathrm{mW}$~\cite{Ribeiro2018Energy} \\
Per-admittance drive-circuit power $P_{\mathrm{drv}}$ & $35~\mu\mathrm{W}$~\cite{Wang2024RISPower} \\
\bottomrule
\end{tabular}
\end{table}

Unless otherwise stated, the remaining default parameters are listed in Table~\ref{tab:defaults}. Direct \ac{MiLAC} hardware measurements are not yet available, so $P_{\mathrm{drv}}$ is taken from the \ac{RIS} drive-circuit measurements of~\cite{Wang2024RISPower}. There, the drive-circuit power scales with the number of control signals divided by $N_g$, the number of elements sharing one signal. This grouping factor, not the raw wattage, is what separates the reported per-element values by orders of magnitude. Every admittance of a \ac{MiLAC} is individually tunable, which fixes $N_g=1$. At $N_g=1$, \cite{Wang2024RISPower} measures an 8-bit shift register (SN74LV595A) drawing $I_{\mathrm{cc}}=20~\mu$A at $V_{\mathrm{cc}}=3.3$~V, that is $0.07$~mW for eight control signals, or $8.75~\mu$W each. Its much larger per-element values come from a continuous bias, where a \ac{DAC} and an operational amplifier supply $430$~mW to a group of $N_g=32$ cells. That arrangement does not apply here: each \ac{MiLAC} admittance needs a dedicated bias to be individually controlled, and a discrete-state admittance, realized for instance as a switched-reactance bank, needs no continuous bias at all.

Resolving an admittance to $b_{\mathrm{adm}}$ bits costs $b_{\mathrm{adm}}$ control signals, so $P_{\mathrm{drv}}=b_{\mathrm{adm}}\times 8.75~\mu$W. How fine a resolution the continuous design of Section~\ref{sec:continuation} needs is open for \ac{MiLAC}, but has been studied for the closely related BD-RIS, whose impedance network obeys the same kind of realization condition. That condition is underdetermined in an \ac{FC} architecture, since the tunable components outnumber the constraints, and the extra degrees of freedom compensate for the coarse resolution of each component. Exploiting this, \cite{Nerini2023Discrete} finds that a single bit already attains approximately the continuous-value performance, for any number of elements. An \ac{FC} \ac{MiLAC} is over-parameterized identically, with $(N{+}K)(N{+}K{+}1)/2=2346$ admittances for the $2NK=512$ real parameters of the beamformer they realize. We nevertheless set $b_{\mathrm{adm}}=4$, four times that resolution, giving $P_{\mathrm{drv}}=35~\mu$W. A \ac{MiLAC} driver may differ from a \ac{RIS} driver, so Section~\ref{sec:ee_performance} sweeps $P_{\mathrm{drv}}$ over two orders of magnitude, and \ac{MiLAC}-aided beamforming keeps the highest \ac{EE} up to $0.56$~mW per admittance, $16$ times the adopted value. Should a prototype land above that, reducing the number of admittances helps as much as reducing the power per admittance, since a stem-connected network needs $\mathcal{O}(NK)$ drive circuits rather than $\mathcal{O}((N{+}K)^2)$ with no significant sum-rate loss~\cite{Zhang2026StemBF}, which is $4.5\times$ fewer here and more than an order of magnitude fewer for large arrays. A quantitative study of this design is left to future work.

Four architectures are compared: digital beamforming ($N$ \ac{RF} chains), \ac{FC} hybrid beamforming (hybrid-\ac{FC}: $N_{\mathrm{RF}}$ chains plus $N N_{\mathrm{RF}}$ \acp{PS}), \ac{SC} hybrid beamforming (hybrid-\ac{SC}: $N_{\mathrm{RF}}$ chains plus $N$ \acp{PS} with one per antenna), and \ac{MiLAC}-aided beamforming ($K$ chains plus $(N{+}K)(N{+}K{+}1)/2$ admittances). In the hybrid-\ac{SC}, each \ac{RF} chain is connected exclusively to a dedicated sub-array of $\lfloor N/N_{\mathrm{RF}} \rfloor$ (or $\lceil N/N_{\mathrm{RF}} \rceil$) antennas, so the analog beamformer $\Fm_{\mathrm{RF}}$ is block-diagonal. All four are optimized under the same \ac{DAC}-\ac{AQNM} and post-quantized transmit power budget using quantization-aware \ac{WMMSE} updates tailored to each architecture. 

For fairness, all four solvers share a common outer structure: a Dinkelbach loop converts the fractional \ac{EE} objective into a subtractive-form inner problem, and the inner problem is solved by \ac{WMMSE}-based alternating optimization with MMSE receiver, MSE weight, and beamformer updates. The \ac{AQNM} quantization parameters $(\alpha_q,\beta_q)$ enter all architectures identically, and the Dinkelbach parameter $\lambda$ is updated by the same rule~\eqref{dinkel_para}. What differs across architectures is the beamformer update and the distortion structure it induces. For digital beamforming, the beamformer $\Wm \in \mathbb{C}^{N \times K}$ is updated in closed form via a regularized linear system with an $N \times N$ inverse~\cite{Shi2011Iteratively}, and the \ac{AQNM} distortion covariance is diagonal in the antenna domain, depending on the per-antenna row norms of $\Wm$. For hybrid beamforming (\ac{FC} and \ac{SC}), the precoder factorizes as $\Wm = \Fm_{\mathrm{RF}}\Fm_{\mathrm{BB}}$, the baseband precoder $\Fm_{\mathrm{BB}} \in \mathbb{C}^{N_{\mathrm{RF}} \times K}$ admits a closed-form update of the same \ac{WMMSE}-derived regularized form as the digital one, operating in the effective channel $\Gm = \Hm\Fm_{\mathrm{RF}}$, while the analog precoder $\Fm_{\mathrm{RF}}$ is updated via projected gradient descent onto the constant-modulus manifold~\cite{Tranter2017Fast} (element-wise for hybrid-\ac{FC}, block-wise for hybrid-\ac{SC}). The distortion covariance is diagonal in the \ac{RF}-chain domain, propagated to the antenna domain through $\Fm_{\mathrm{RF}}$. 

\begin{table}[t]
\caption{Per-iteration complexity and measured runtime of the four \ac{EE} solvers at $K=N_{\mathrm{RF}}=4$ and $P_{\max}=30$~dBm.}
\label{tab:runtime}
{\centering
\footnotesize
\setlength{\tabcolsep}{3.5pt}
\begin{tabular}{l c c c c c}
\toprule
Architecture & Per-iteration cost & $T_D$ & $T_M$ & [s] $N{=}64$ & [s] $N{=}256$ \\
\midrule
Digital & $\mathcal{O}(N^3)$ & 3.6 & 34 & 0.39 & 9.79 \\
MiLAC-aided & $\mathcal{O}(K^3)$ & 5.3 & 37 & 0.26 & 0.24 \\
Hybrid-FC & $\mathcal{O}(N^3)$ & 3.4 & 42 & 0.38 & 6.13 \\
Hybrid-SC & $\mathcal{O}(N^3)$ & 4.3 & 37 & 0.37 & 6.12 \\
\bottomrule
\end{tabular}

\smallskip

\parbox{0.98\linewidth}{\scriptsize $T_D$: Dinkelbach rounds and $T_M$: WMMSE iterations of the returned round, both at $N=64$; runtime: mean time to solve one channel realization to convergence, over 20 DeepMIMO realizations at $N=64$ and 10 at $N=256$. Platform: Apple M5.}
}
\end{table}

Table~\ref{tab:runtime} compares the computational cost of the four solvers under identical channels and stopping rules, timing one channel realization on a single CPU core. The iteration numbers are comparable, and the \ac{MiLAC} solver is already the fastest at $N=64$. Raising $N$ to $256$ leaves its runtime nearly unchanged, since its iterations run in the $K\times K$ space after the one-time $\mathcal{O}(NK^2)$ reduction, while the digital and hybrid runtimes grow by more than an order of magnitude (Remark~\ref{rem:complexity}). Figure~\ref{fig:convergence} shows the corresponding behavior on a representative realization: the \ac{EE} rises monotonically and the relative Dinkelbach gap falls below the $10^{-5}$ tolerance within three (digital) to six (\ac{MiLAC}) outer rounds. The reduction is specific to \ac{MiLAC}-aided beamforming: its stream-side \acp{DAC} produce the distortion covariance $\alpha_q\beta_q\Wm\Wm^\HH$, so the problem depends on $\Wm$ only through the reduced variables of Proposition~\ref{prop:reduced_param}, whereas per-antenna and per-chain quantization make the digital and hybrid distortion depend on the individual antenna and chain signal powers rather than on $\Hm\Wm$ alone, so the same reduction is not exact for them, and the hybrids additionally constrain every entry of the analog precoder (Remark~\ref{rem:rowspace_advantage}).

\begin{figure}[t]
\centering
\includegraphics[width=0.99\linewidth]{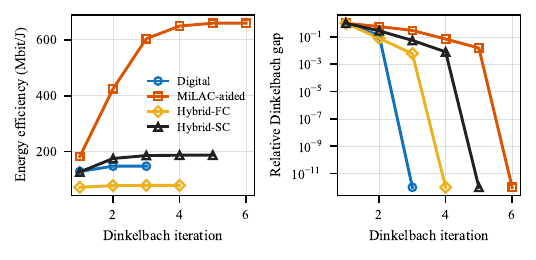}
\vspace{-0.25cm}
\caption{\ac{EE} and relative Dinkelbach gap versus outer iteration for the four solvers on a representative realization at the default setting.}
\label{fig:convergence}
\end{figure}

\begin{figure}[t]
    \centering
    \resizebox{0.99\linewidth}{!}{\definecolor{mycolor1}{rgb}{0.06600,0.44300,0.74500}%
\definecolor{mycolor2}{rgb}{0.86600,0.32900,0.00000}%
\definecolor{mycolor3}{rgb}{0.92900,0.69400,0.12500}%
\definecolor{mycolor4}{rgb}{0.12941,0.12941,0.12941}%

\begin{tikzpicture}

\begin{axis}[%
width=0.5\linewidth,
height=0.38\linewidth,
at={(0\linewidth,0\linewidth)},
scale only axis,
xmin=10.0,
xmax=40.0,
xlabel style={font=\footnotesize\color{mycolor4}},
xlabel={Transmit power budget (dBm)},
ymin=0,
ymax=1000,
ylabel style={font=\footnotesize\color{mycolor4}},
ylabel={Energy efficiency (Mbit/J)},
axis background/.style={fill=white},
xmajorgrids,
ymajorgrids,
legend style={at={(0.6,0.80)}, font=\footnotesize, legend cell align=left, align=left},
tick label style={font=\footnotesize},
xtick={10,15,20,25,30,35,40}
]
\addplot [color=mycolor1, line width=1.5pt, mark=o, mark options={solid, mycolor1}]
  table[row sep=crcr]{%
10	66.5976795383314\\
15	111.976921897799\\
20	159.048832185905\\
25	181.627166932763\\
30	183.584565175415\\
35	183.65372524619\\
40	183.655040596558\\
};
\addlegendentry{Digital}

\addplot [color=mycolor2, line width=1.5pt, mark=square, mark options={solid, mycolor2}]
  table[row sep=crcr]{%
10	557.971384450034\\
15	748.950325301452\\
20	778.800641169596\\
25	779.818914605903\\
30	779.840257943371\\
35	779.840257943371\\
40	779.840257943371\\
};
\addlegendentry{MiLAC-aided}

\addplot [color=mycolor3, line width=1.5pt, mark=diamond, mark options={solid, mycolor3}]
  table[row sep=crcr]{%
10	28.155712976868\\
15	47.1995110926083\\
20	71.6588200183618\\
25	87.2189192990251\\
30	92.0466492448919\\
35	92.9086123487234\\
40	93.7268918332268\\
};
\addlegendentry{Hybrid-FC}

\addplot [color=mycolor4, line width=1.5pt, mark=triangle, mark options={solid, mycolor4}]
  table[row sep=crcr]{%
10	88.7100299651642\\
15	147.017950646942\\
20	200.382857788863\\
25	215.914713063299\\
30	217.100301705888\\
35	217.164031222153\\
40	217.165033853411\\
};
\addlegendentry{Hybrid-SC}

\end{axis}

\begin{axis}[%
width=0.5\linewidth,
height=0.38\linewidth,
at={(0.62\linewidth,0\linewidth)},
scale only axis,
xmin=10.0,
xmax=40.0,
xlabel style={font=\footnotesize\color{mycolor4}},
xlabel={Transmit power budget (dBm)},
ymin=1,
ymax=8,
ylabel style={font=\footnotesize\color{mycolor4}},
ylabel={Spectral efficiency (bit/s/Hz)},
axis background/.style={fill=white},
xmajorgrids,
ymajorgrids,
tick label style={font=\footnotesize},
xtick={10,15,20,25,30,35,40}
]
\addplot [color=mycolor1, line width=1.5pt, mark=o, mark options={solid, mycolor1}, forget plot]
  table[row sep=crcr]{%
10	1.71386267058257\\
15	2.97135397520455\\
20	4.62321613816641\\
25	6.33982601594927\\
30	6.74874714042343\\
35	6.79227373815175\\
40	6.7932083593043\\
};
\addplot [color=mycolor2, line width=1.5pt, mark=square, mark options={solid, mycolor2}, forget plot]
  table[row sep=crcr]{%
10	1.6670345728359\\
15	2.7952173315626\\
20	3.37570546414548\\
25	3.44253625167126\\
30	3.45399681476073\\
35	3.45399681476073\\
40	3.45399681476073\\
};
\addplot [color=mycolor3, line width=1.5pt, mark=diamond, mark options={solid, mycolor3}, forget plot]
  table[row sep=crcr]{%
10	1.61789963826123\\
15	2.75000465294179\\
20	4.3493225427276\\
25	5.92881889127719\\
30	7.06441446489819\\
35	7.27699612923102\\
40	7.34892061436878\\
};
\addplot [color=mycolor4, line width=1.5pt, mark=triangle, mark options={solid, mycolor4}, forget plot]
  table[row sep=crcr]{%
10	1.41852397671562\\
15	2.4620263389474\\
20	3.86432407854813\\
25	4.89794660579628\\
30	5.00555185106837\\
35	4.99904322926553\\
40	5.00361087585557\\
};
\end{axis}

\begin{axis}[%
width=1.247\linewidth,
height=0.474\linewidth,
at={(-0.162\linewidth,-0.058\linewidth)},
scale only axis,
xmin=0,
xmax=1,
ymin=0,
ymax=1,
axis line style={draw=none},
ticks=none,
]
\end{axis}

\end{tikzpicture}%
}
    \vspace{-0.5cm}
    \caption{\ac{EE} and achieved \ac{SE} versus transmit-power budget across architectures. 
    }
    \label{fig:power}
\end{figure}

\subsection{\ac{EE}/\ac{SE} Performance Under \ac{EE}-Oriented Designs}\label{sec:ee_performance}
Figure~\ref{fig:power} compares the architectures over a $10$-$40$~dBm transmit-power sweep. \ac{MiLAC}-aided beamforming achieves the highest \ac{EE} throughout.
The paired \ac{SE} panel shows that this comes at a moderate \ac{SE} cost: \ac{MiLAC}'s \ac{EE}-optimal point levels off near $3.5$~bit/s/Hz beyond $25$~dBm, while digital and hybrid-\ac{FC} reach $6.79$ and $7.35$~bit/s/Hz at $40$~dBm, respectively. The mechanism is that raising $P_{\max}$ inflates the \ac{PA} supply power and exposes each architecture's circuit-power floor. \ac{MiLAC} benefits most as its active front remains stream-scaled and its passive network is largely insensitive to transmit power.

\ac{MiLAC}'s \ac{SE} shortfall originates from the feasibility constraint $\|\Ym\|_2 \le 1$ (Proposition~\ref{prop:reduced_param}), which caps the power steered into any singular direction of $\bar{\Hm}^{1/2}$. Digital beamforming, constrained only by the aggregate budget $\|\Wm\|_{\mathrm{F}}^2 \le P_{\max}/\alpha_q$, can load more power onto weak eigenmodes for higher multiplexing gain, at the cost of a heavier active front end. The \ac{EE} framework exploits this asymmetry, trading a moderate \ac{SE} shortfall for a much lighter power footprint. The gap in Fig.~\ref{fig:power} thus mixes the structural cost of $\|\Ym\|_2\le 1$ with the transmit-power backoff selected by the \ac{EE} objective. Section~\ref{sec:tradeoff_discussion} separates the two at the maximum-\ac{SE} endpoints of the tradeoff boundaries, where the backoff is absent.

\begin{table}[t]
\caption{Power consumption breakdown across architectures [Watt]}
\label{tab:power_components}
\centering
\footnotesize
\resizebox{\linewidth}{!}{%
\begin{tabular}{lcccccc}
\toprule
Architecture & RF{+}DAC & PA supply & Phase shifters & MiLAC static & Common & Total \\
\midrule
Digital & 2.514 & 2.053 & 0.000 & 0.000 & 0.022 & 4.590 \\
MiLAC-aided & 0.157 & 0.558 & 0.000 & 0.082 & 0.022 & 0.819 \\
Hybrid-FC & 0.157 & 2.675 & 5.530 & 0.000 & 0.022 & 8.384 \\
Hybrid-SC & 0.157 & 1.462 & 1.382 & 0.000 & 0.022 & 3.024 \\
\bottomrule
\end{tabular}
}

\end{table}

Table~\ref{tab:power_components} reports consumption at $P_{\max}=30$~dBm. Hybrid-\ac{FC} ($8.38$~W) is dominated by its $N N_{\mathrm{RF}} P_{\mathrm{PS}} \approx 5.53$~W \ac{PS} burden, inverting the expected hybrid-versus-digital ordering~\cite{Ribeiro2018Energy}. Hybrid-\ac{SC} ($3.02$~W) cuts this to $N P_{\mathrm{PS}} \approx 1.38$~W, dropping below digital ($4.59$~W). \ac{MiLAC} ($0.82$~W) is smallest: only $K$ active chains and a small per-admittance consumption ($\approx 0.08$~W), and is the only design whose $N$-dependent cost avoids per-chain \ac{RF}/\ac{DAC} contributions, scaling as $(N+K)^2$ through a passive but \ac{FC} admittance network.

\begin{figure}[t]
\centering
\resizebox{0.99\linewidth}{!}{\definecolor{mycolor1}{rgb}{0.06600,0.44300,0.74500}%
\definecolor{mycolor2}{rgb}{0.86600,0.32900,0.00000}%
\definecolor{mycolor3}{rgb}{0.92900,0.69400,0.12500}%
\definecolor{mycolor4}{rgb}{0.12941,0.12941,0.12941}%

\begin{tikzpicture}

\begin{axis}[%
width=0.5\linewidth,
height=0.38\linewidth,
at={(0\linewidth,0\linewidth)},
scale only axis,
xmin=2.0,
xmax=16.0,
xlabel style={font=\footnotesize\color{mycolor4}},
xlabel={Number of users},
ymin=0,
ymax=1400,
ylabel style={font=\footnotesize\color{mycolor4}},
ylabel={Energy efficiency (Mbit/J)},
axis background/.style={fill=white},
xmajorgrids,
ymajorgrids,
legend style={at={(0.6,0.80)}, font=\footnotesize, legend cell align=left, align=left},
tick label style={font=\footnotesize},
xtick={2,4,6,8,10,12,16}
]
\addplot [color=mycolor1, line width=1.5pt, mark=o, mark options={solid, mycolor1}]
  table[row sep=crcr]{%
2	121.705492990984\\
4	197.457354858116\\
6	272.207792047477\\
8	343.305459933407\\
10	403.824902838937\\
12	456.716331071543\\
16	533.924667736809\\
};
\addlegendentry{Digital}

\addplot [color=mycolor2, line width=1.5pt, mark=square, mark options={solid, mycolor2}]
  table[row sep=crcr]{%
2	693.307239389263\\
4	829.185164422178\\
6	994.985558482876\\
8	1045.92422963223\\
10	1122.82831847822\\
12	1132.3869823036\\
16	1112.15254285338\\
};
\addlegendentry{MiLAC-aided}

\addplot [color=mycolor3, line width=1.5pt, mark=diamond, mark options={solid, mycolor3}]
  table[row sep=crcr]{%
2	101.695953785127\\
4	99.0227639987473\\
6	102.95641318452\\
8	103.323626379778\\
10	104.353521629357\\
12	103.464762701386\\
16	96.1066016452723\\
};
\addlegendentry{Hybrid-FC}

\addplot [color=mycolor4, line width=1.5pt, mark=triangle, mark options={solid, mycolor4}]
  table[row sep=crcr]{%
2	154.195019495286\\
4	220.52697159635\\
6	276.789047826498\\
8	328.685484185274\\
10	373.171234876396\\
12	415.241507663906\\
16	470.057794086365\\
};
\addlegendentry{Hybrid-SC}

\end{axis}

\begin{axis}[%
width=0.5\linewidth,
height=0.38\linewidth,
at={(0.62\linewidth,0\linewidth)},
scale only axis,
xmin=2.0,
xmax=16.0,
xlabel style={font=\footnotesize\color{mycolor4}},
xlabel={Number of users},
ymin=0,
ymax=28,
ylabel style={font=\footnotesize\color{mycolor4}},
ylabel={Spectral efficiency (bit/s/Hz)},
axis background/.style={fill=white},
xmajorgrids,
ymajorgrids,
tick label style={font=\footnotesize},
xtick={2,4,6,8,10,12,16}
]
\addplot [color=mycolor1, line width=1.5pt, mark=o, mark options={solid, mycolor1}, forget plot]
  table[row sep=crcr]{%
2	4.98418332934963\\
4	8.367849172278\\
6	11.6566695431077\\
8	14.4813807064598\\
10	17.3071344425023\\
12	19.6875847616008\\
16	22.8831808044464\\
};
\addplot [color=mycolor2, line width=1.5pt, mark=square, mark options={solid, mycolor2}, forget plot]
  table[row sep=crcr]{%
2	2.70172458613887\\
4	4.69993107687052\\
6	6.64503841219995\\
8	8.24250246825668\\
10	10.0417164540475\\
12	11.7822288305831\\
16	14.4129344849032\\
};
\addplot [color=mycolor3, line width=1.5pt, mark=diamond, mark options={solid, mycolor3}, forget plot]
  table[row sep=crcr]{%
2	4.30542026611712\\
4	8.00383423477297\\
6	11.8265340760851\\
8	15.30409196081\\
10	18.6728563725758\\
12	21.4703282928587\\
16	25.442381098338\\
};
\addplot [color=mycolor4, line width=1.5pt, mark=triangle, mark options={solid, mycolor4}, forget plot]
  table[row sep=crcr]{%
2	3.79848353203706\\
4	6.08007584873994\\
6	8.04526952559678\\
8	10.0163386282152\\
10	12.0286462378841\\
12	14.3129451177181\\
16	17.288999838403\\
};
\end{axis}

\begin{axis}[%
width=1.247\linewidth,
height=0.474\linewidth,
at={(-0.162\linewidth,-0.058\linewidth)},
scale only axis,
xmin=0,
xmax=1,
ymin=0,
ymax=1,
axis line style={draw=none},
ticks=none,
]
\end{axis}

\end{tikzpicture}%
}
\vspace{-0.5cm}
\caption{\ac{EE} and achieved \ac{SE} versus served user number across architectures.}
\label{fig:users}
\end{figure}

Figure~\ref{fig:users} traces \ac{EE} and \ac{SE} as the user number grows from $K{=}2$ to $K{=}16$. Each architecture's static-power denominator depends on $K$ differently. Digital's $N$-scaled \ac{RF}/\ac{DAC} cost is $K$-independent, so its \ac{EE} rises steadily with sum-\ac{SE}. Hybrid-\ac{FC}'s $K N P_{\mathrm{PS}}$ \ac{PS} cost tracks sum-\ac{SE} nearly linearly, pinning its \ac{EE} near $100$~Mbit/J. Hybrid-\ac{SC}'s $K$-independent \ac{PS} cost lets its \ac{EE} rise similarly to digital.
\ac{MiLAC}-aided beamforming's $(N{+}K)^{2}$ static cost is dominated by $N^{2}$ when $N \gg K$, so its \ac{EE} climbs from $693$~Mbit/J at $K{=}2$ to $1132$~Mbit/J at $K{=}12$ before slightly receding to $1112$~Mbit/J at $K{=}16$, while retaining the widest \ac{EE} margin throughout. None of the four designs constrains per-user rates, and \ac{EE} maximization can leave users with weak channels at essentially zero rate, since a weak user adds little to the sum rate but still costs transmit power. This behavior is a property of the objective rather than of any architecture, so the comparison favors none of them. In practice, fairness across users is handled by the scheduler over many channel realizations, and extending~\eqref{eq:problem_full} with per-user rate constraints is left for future study.

\begin{figure}[t]
    \centering
    \resizebox{0.99\linewidth}{!}{\definecolor{mycolor1}{rgb}{0.06600,0.44300,0.74500}%
\definecolor{mycolor2}{rgb}{0.86600,0.32900,0.00000}%
\definecolor{mycolor3}{rgb}{0.92900,0.69400,0.12500}%
\definecolor{mycolor4}{rgb}{0.12941,0.12941,0.12941}%

\begin{tikzpicture}

\begin{axis}[%
width=0.5\linewidth,
height=0.38\linewidth,
at={(0\linewidth,0\linewidth)},
scale only axis,
xmin=8.0,
xmax=256.0,
xlabel style={font=\footnotesize\color{mycolor4}},
xlabel={Number of antennas},
ymin=0,
ymax=1000,
ylabel style={font=\footnotesize\color{mycolor4}},
ylabel={Energy efficiency (Mbit/J)},
axis background/.style={fill=white},
xmajorgrids,
ymajorgrids,
legend style={at={(0.7,0.80)}, font=\footnotesize, legend cell align=left, align=left},
tick label style={font=\footnotesize},
xtick={8,16,32,64,128,192,256}
]
\addplot [color=mycolor1, line width=1.5pt, mark=o, mark options={solid, mycolor1}]
  table[row sep=crcr]{%
8	345.017392362069\\
16	321.281229447437\\
32	266.940407516552\\
64	206.396136776877\\
128	149.233273801813\\
192	121.984186414932\\
256	104.673232446767\\
};
\addlegendentry{Digital}

\addplot [color=mycolor2, line width=1.5pt, mark=square, mark options={solid, mycolor2}]
  table[row sep=crcr]{%
8	466.083781951168\\
16	646.639250207951\\
32	830.135704321323\\
64	886.237459123148\\
128	735.951960915445\\
192	556.114781871057\\
256	416.374633764765\\
};
\addlegendentry{MiLAC-aided}

\addplot [color=mycolor3, line width=1.5pt, mark=diamond, mark options={solid, mycolor3}]
  table[row sep=crcr]{%
8	196.176497012391\\
16	174.292089018784\\
32	135.578411255299\\
64	97.3922071653816\\
128	67.7822898421792\\
192	52.6455032013775\\
256	43.6799782593295\\
};
\addlegendentry{Hybrid-FC}

\addplot [color=mycolor4, line width=1.5pt, mark=triangle, mark options={solid, mycolor4}]
  table[row sep=crcr]{%
8	314.795624593872\\
16	317.278872578602\\
32	286.620594275004\\
64	233.019128826864\\
128	170.284073663457\\
192	140.079739469197\\
256	121.22942960156\\
};
\addlegendentry{Hybrid-SC}

\end{axis}

\begin{axis}[%
width=0.5\linewidth,
height=0.38\linewidth,
at={(0.62\linewidth,0\linewidth)},
scale only axis,
xmin=8.0,
xmax=256.0,
xlabel style={font=\footnotesize\color{mycolor4}},
xlabel={Number of antennas},
ymin=2,
ymax=16,
ylabel style={font=\footnotesize\color{mycolor4}},
ylabel={Spectral efficiency (bit/s/Hz)},
axis background/.style={fill=white},
xmajorgrids,
ymajorgrids,
tick label style={font=\footnotesize},
xtick={8,16,32,64,128,192,256}
]
\addplot [color=mycolor1, line width=1.5pt, mark=o, mark options={solid, mycolor1}, forget plot]
  table[row sep=crcr]{%
8	3.04904523841954\\
16	4.33458763944996\\
32	6.02272595810417\\
64	8.3275911117193\\
128	11.3339802187329\\
192	13.0090104802694\\
256	14.0303217780362\\
};
\addplot [color=mycolor2, line width=1.5pt, mark=square, mark options={solid, mycolor2}, forget plot]
  table[row sep=crcr]{%
8	2.51375388609843\\
16	3.03832747470058\\
32	3.68882196903358\\
64	4.37728004615254\\
128	5.86062514445581\\
192	7.09857715013838\\
256	8.15054830546079\\
};
\addplot [color=mycolor3, line width=1.5pt, mark=diamond, mark options={solid, mycolor3}, forget plot]
  table[row sep=crcr]{%
8	3.51622675263236\\
16	4.77081125081324\\
32	6.15231815951543\\
64	7.68415262780649\\
128	9.4057858514421\\
192	10.4025697478035\\
256	11.1420906063516\\
};
\addplot [color=mycolor4, line width=1.5pt, mark=triangle, mark options={solid, mycolor4}, forget plot]
  table[row sep=crcr]{%
8	2.92437493469223\\
16	3.67011917048789\\
32	4.63873884398827\\
64	6.08937230525986\\
128	7.67681010880762\\
192	8.76222050380942\\
256	9.52396895626085\\
};
\end{axis}

\begin{axis}[%
width=1.247\linewidth,
height=0.474\linewidth,
at={(-0.162\linewidth,-0.058\linewidth)},
scale only axis,
xmin=0,
xmax=1,
ymin=0,
ymax=1,
axis line style={draw=none},
ticks=none,
]
\end{axis}

\end{tikzpicture}%
}
    \vspace{-0.5cm}
    \caption{\ac{EE} and achieved \ac{SE} versus antenna number across architectures. 
    }
    \label{fig:antennas}
\end{figure}

Figure~\ref{fig:antennas} reveals the antenna-number scaling. \ac{MiLAC}-aided beamforming leads across the entire sweep, already surpassing digital at $N{=}8$ ($466$ versus\ $345$~Mbit/J) because its $K{=}4$ stream-domain chains are cheaper than the $N{=}8$ digital front end. As $N$ grows, digital's \ac{EE} declines with its $N$-scaled \ac{RF}/\ac{DAC} cost, hybrid-\ac{FC} degrades faster under the $N N_{\mathrm{RF}}$ \ac{PS} burden, and hybrid-\ac{SC}'s $N$-scaled \ac{PS} cost lets it overtake digital by $N{=}32$. \ac{MiLAC} keeps only $K$ active chains regardless of $N$, paying a quadratic passive-network cost: its \ac{EE} peaks at $N{=}64$ ($886$~Mbit/J) and recedes to $416$~Mbit/J at $N{=}256$ as the $(N{+}K)^{2}$ cost begins to dominate. The \ac{SE} panel shows all architectures improve with $N$. Digital's larger \ac{SE} reflects that its heavier static overhead pushes its \ac{EE}-optimum toward high-$P_{\mathrm{tx}}$, high-\ac{SE} operation, whereas \ac{MiLAC}'s lighter burden favors lower-power operation. Under \ac{SE}-only optimization this gap would close as $N/K$ grows~\cite{Wu2026MiLAC}.

\begin{figure}[t]
    \centering
    \resizebox{0.99\linewidth}{!}{\definecolor{mycolor1}{rgb}{0.06600,0.44300,0.74500}%
\definecolor{mycolor2}{rgb}{0.86600,0.32900,0.00000}%
\definecolor{mycolor3}{rgb}{0.92900,0.69400,0.12500}%
\definecolor{mycolor4}{rgb}{0.12941,0.12941,0.12941}%

\begin{tikzpicture}

\begin{axis}[%
width=0.5\linewidth,
height=0.38\linewidth,
at={(0\linewidth,0\linewidth)},
scale only axis,
xmin=1.0,
xmax=8.0,
xlabel style={font=\footnotesize\color{mycolor4}},
xlabel={DAC resolution (bits)},
ymin=0,
ymax=800,
ylabel style={font=\footnotesize\color{mycolor4}},
ylabel={Energy efficiency (Mbit/J)},
axis background/.style={fill=white},
xmajorgrids,
ymajorgrids,
legend style={at={(0.7,0.80)}, font=\footnotesize, legend cell align=left, align=left},
tick label style={font=\footnotesize},
xtick={1,2,3,4,5,6,7,8}
]
\addplot [color=mycolor1, line width=1.5pt, mark=o, mark options={solid, mycolor1}]
  table[row sep=crcr]{%
1	147.781524501572\\
2	169.77859672235\\
3	173.582173605369\\
4	170.781162712617\\
5	165.594240978982\\
6	159.076459895625\\
7	150.860562702041\\
8	139.846774264128\\
};
\addlegendentry{Digital}

\addplot [color=mycolor2, line width=1.5pt, mark=square, mark options={solid, mycolor2}]
  table[row sep=crcr]{%
1	304.513078916591\\
2	530.045478315496\\
3	664.996969084847\\
4	724.478062750723\\
5	735.737313423299\\
6	726.040235059407\\
7	706.188691001026\\
8	676.637777677418\\
};
\addlegendentry{MiLAC-aided}

\addplot [color=mycolor3, line width=1.5pt, mark=diamond, mark options={solid, mycolor3}]
  table[row sep=crcr]{%
1	40.3694395267979\\
2	63.1858280644922\\
3	77.7485666573107\\
4	85.6477495500038\\
5	90.9128943759303\\
6	93.2543630630322\\
7	93.8097041704668\\
8	93.8400532723531\\
};
\addlegendentry{Hybrid-FC}

\addplot [color=mycolor4, line width=1.5pt, mark=triangle, mark options={solid, mycolor4}]
  table[row sep=crcr]{%
1	101.044441388318\\
2	157.617268223469\\
3	189.807670008675\\
4	203.735705133146\\
5	208.277795287767\\
6	208.876464998385\\
7	208.025511151309\\
8	206.2025854525\\
};
\addlegendentry{Hybrid-SC}

\end{axis}

\begin{axis}[%
width=0.5\linewidth,
height=0.38\linewidth,
at={(0.62\linewidth,0\linewidth)},
scale only axis,
xmin=1.0,
xmax=8.0,
xlabel style={font=\footnotesize\color{mycolor4}},
xlabel={DAC resolution (bits)},
ymin=1,
ymax=8,
ylabel style={font=\footnotesize\color{mycolor4}},
ylabel={Spectral efficiency (bit/s/Hz)},
axis background/.style={fill=white},
xmajorgrids,
ymajorgrids,
tick label style={font=\footnotesize},
xtick={1,2,3,4,5,6,7,8}
]
\addplot [color=mycolor1, line width=1.5pt, mark=o, mark options={solid, mycolor1}, forget plot]
  table[row sep=crcr]{%
1	5.23056373907564\\
2	6.35728399926417\\
3	6.81106366702461\\
4	6.96984503417239\\
5	7.01625278296899\\
6	7.02768246455449\\
7	7.03886715181171\\
8	7.07722121354843\\
};
\addplot [color=mycolor2, line width=1.5pt, mark=square, mark options={solid, mycolor2}, forget plot]
  table[row sep=crcr]{%
1	1.4443778538737\\
2	2.57315929120022\\
3	3.31636738861638\\
4	3.67070889597113\\
5	3.83374476115923\\
6	3.90485203231403\\
7	3.94719086059998\\
8	3.98681191096155\\
};
\addplot [color=mycolor3, line width=1.5pt, mark=diamond, mark options={solid, mycolor3}, forget plot]
  table[row sep=crcr]{%
1	2.8860039764441\\
2	4.61810792137769\\
3	5.77852885442571\\
4	6.48584652053142\\
5	7.08658531383673\\
6	7.29849692105495\\
7	7.4200244062947\\
8	7.43885512407872\\
};
\addplot [color=mycolor4, line width=1.5pt, mark=triangle, mark options={solid, mycolor4}, forget plot]
  table[row sep=crcr]{%
1	2.36998091517951\\
2	3.63457485011648\\
3	4.43594182233152\\
4	4.81771416456828\\
5	4.98444475046319\\
6	5.06216442939465\\
7	5.08121103058542\\
8	5.09459038517266\\
};
\end{axis}

\begin{axis}[%
width=1.247\linewidth,
height=0.474\linewidth,
at={(-0.162\linewidth,-0.058\linewidth)},
scale only axis,
xmin=0,
xmax=1,
ymin=0,
ymax=1,
axis line style={draw=none},
ticks=none,
]
\end{axis}

\end{tikzpicture}%
}
    \vspace{-0.5cm}
    \caption{\ac{EE} and achieved \ac{SE} versus \ac{DAC} resolution across architectures.} 
    \label{fig:bits}
\end{figure}

Figure~\ref{fig:bits} isolates the role of \ac{DAC} resolution. \ac{MiLAC} places only $K$ stream-domain \acp{DAC} rather than $N$ antenna-domain ones, so its \ac{DAC} power scales as $K P_{\mathrm{DAC}}(b)$ versus $N P_{\mathrm{DAC}}(b)$ for digital. This $K/N$ ratio makes raising $b_{\mathrm{DAC}}$ substantially cheaper and yields a broader \ac{EE} peak that is less sensitive to bit depth, whereas digital's \ac{EE} drops more sharply at high $b_{\mathrm{DAC}}$. Hybrid-\ac{FC} and hybrid-\ac{SC} share the same $N_{\mathrm{RF}}$-scaled \ac{DAC} number and differ only in \ac{PS} power, so hybrid-\ac{SC} consistently outperforms hybrid-\ac{FC}. All four architectures use the same per-\ac{DAC} power model~\eqref{eq:pdac}, so their \ac{DAC} powers differ only through the number of \acp{DAC}, $K$ here versus $N_{\mathrm{RF}}$ and $N$. A different \ac{DAC} technology would change the constants in~\eqref{eq:pdac} for all architectures alike, so the conclusions of this sweep reflect the \ac{DAC} numbers rather than the specific \ac{DAC} technology.

\begin{figure}[t]
\centering
\resizebox{0.99\linewidth}{!}{\definecolor{mycolor1}{rgb}{0.06600,0.44300,0.74500}%
\definecolor{mycolor2}{rgb}{0.86600,0.32900,0.00000}%
\definecolor{mycolor3}{rgb}{0.92900,0.69400,0.12500}%
\definecolor{mycolor4}{rgb}{0.12941,0.12941,0.12941}%

\begin{tikzpicture}

\begin{axis}[%
width=1\linewidth,
height=0.38\linewidth,
at={(0\linewidth,0\linewidth)},
scale only axis,
xmode=log,
xmin=0.25,
xmax=128.0,
xtick={0.25,0.5,1,2,4,8,16,32,64,128},
xticklabels={{0.25},{0.5},{1},{2},{4},{8},{16},{32},{64},{128}},
xminorticks=true,
xlabel style={font=\footnotesize\color{mycolor4}},
xlabel={MiLAC static-power scale factor},
ymin=0,
ymax=1400,
ylabel style={font=\footnotesize\color{mycolor4}},
ylabel={Energy efficiency (Mbit/J)},
axis background/.style={fill=white},
xmajorgrids,
xminorgrids,
ymajorgrids,
legend style={font=\footnotesize, legend cell align=left, align=left},
tick label style={font=\footnotesize}
]
\addplot [color=mycolor1, line width=1.5pt, mark=o, mark options={solid, mycolor1}]
  table[row sep=crcr]{%
0.25	217.388598074499\\
0.5	217.483807513905\\
1	217.489458916048\\
2	217.493590629576\\
4	217.49626864109\\
8	217.49890663513\\
16	217.501505174298\\
32	217.504064815453\\
64	217.506586109708\\
128	217.507464112214\\
};
\addlegendentry{Digital}

\addplot [color=mycolor2, line width=1.5pt, mark=square, mark options={solid, mycolor2}]
  table[row sep=crcr]{%
0.25	1155.1949132095\\
0.5	1092.84495066111\\
1	989.4232047426\\
2	831.834130070164\\
4	637.777459372724\\
8	447.634018385182\\
16	292.506257940375\\
32	180.694210503158\\
64	107.192933988831\\
128	61.7553149567634\\
};
\addlegendentry{MiLAC-aided}

\addplot [color=mycolor3, line width=1.5pt, mark=diamond, mark options={solid, mycolor3}]
  table[row sep=crcr]{%
0.25	103.151015538575\\
0.5	104.523341716427\\
1	104.551469978732\\
2	104.554791529635\\
4	104.554797709895\\
8	104.554797873509\\
16	104.554797878466\\
32	104.554797878627\\
64	104.554797878632\\
128	104.554797878632\\
};
\addlegendentry{Hybrid-FC}

\addplot [color=mycolor4, line width=1.5pt, mark=triangle, mark options={solid, mycolor4}]
  table[row sep=crcr]{%
0.25	241.430724285436\\
0.5	254.85790319039\\
1	255.215997038598\\
2	255.223393652598\\
4	255.223808392984\\
8	255.223837109282\\
16	255.223839364032\\
32	255.22383956021\\
64	255.223839578823\\
128	255.223839580721\\
};
\addlegendentry{Hybrid-SC}

\end{axis}

\begin{axis}[%
width=1.245\linewidth,
height=0.473\linewidth,
at={(-0.162\linewidth,-0.058\linewidth)},
scale only axis,
xmin=0,
xmax=1,
ymin=0,
ymax=1,
axis line style={draw=none},
ticks=none,
]
\end{axis}

\end{tikzpicture}%
}
\vspace{-0.5cm}
\caption{\ac{EE} versus the scaled per-admittance drive-circuit power $P_{\mathrm{drv}}$.
}
\label{fig:static_sens}
\end{figure}

\begin{figure*}[t]
    \centering
    \subfloat[\ac{SE}-\ac{EE} tradeoff boundaries at three power budgets (left to right: $P_{\max}=20$~dBm, $30$~dBm, and $40$~dBm).]{%
        \resizebox{\linewidth}{!}{\input{figs/deepmimo_paper_frontier_overview.tikz}}%
        \label{fig:frontier_overview}%
    }

    \vspace{-0.8cm}

    \subfloat[\ac{SE}-\ac{EE} tradeoff boundaries at three antenna numbers (left to right: $N=16$, $64$, and $256$).]{%
        \resizebox{\linewidth}{!}{\input{figs/deepmimo_paper_frontier_by_N.tikz}}%
        \label{fig:frontier_N}%
    }
    \caption{\ac{SE}-\ac{EE} tradeoff boundaries across architectures. Each architecture is optimized with its own endpoint normalization via the weighted-sum method, but all curves are plotted in the raw $(\mathrm{SE},\mathrm{EE})$ plane.}
    \label{fig:frontier_combined}
\end{figure*}

Figure~\ref{fig:static_sens} probes robustness by scaling $P_{\mathrm{drv}}$ over two orders of magnitude. \ac{MiLAC}-aided beamforming keeps the highest \ac{EE} up to $16$ times the adopted value, that is up to $0.56$~mW per admittance, and falls below digital and hybrid-\ac{SC} only between $16\times$ and $32\times$, leaving a substantial margin for the architectural conclusion even if the true cost significantly exceeds the adopted reference.

\subsection{\ac{SE}-\ac{EE} Tradeoff Operating Range}\label{sec:tradeoff_discussion}

Having established \ac{MiLAC}-aided beamforming's \ac{EE} advantage, we now examine the \ac{SE}-\ac{EE} tradeoff boundaries to understand \emph{how much \ac{SE} is sacrificed for \ac{EE} improvement} and quantify the \emph{operating flexibility} offered to the system designer. We note that the \ac{EE} endpoint of the tradeoff boundary upper-bounds the \ac{EE}-oriented design value of Figs.~\ref{fig:power}--\ref{fig:static_sens}, because the \ac{SCA}-based solver with endpoint refinement tightens local stationarity relative to the direct Dinkelbach-\ac{WMMSE} solver used in the \ac{EE}-only comparisons.

Figure~\ref{fig:frontier_combined}(a) plots the \ac{SE}-\ac{EE} boundaries at $20$, $30$, and $40$~dBm obtained via Algorithm~\ref{algo:tradeoff}. At $20$~dBm, only \ac{MiLAC}-aided displays a non-trivial boundary ($\Delta\mathrm{SE}\approx1.09$~bit/s/Hz, $\Delta\mathrm{EE}\approx146$~Mbit/J), as the other three architectures' circuit-power floors leave little room for rate-efficiency trading. At $30$~dBm all four develop meaningful frontiers, with \ac{MiLAC} spanning the broadest ($\Delta\mathrm{SE}\approx4.58$~bit/s/Hz, $\Delta\mathrm{EE}\approx680$~Mbit/J). At $40$~dBm this widens further, with \ac{MiLAC}'s $\Delta\mathrm{EE}$ reaching $764$~Mbit/J. The right endpoints of these boundaries also show how much of the \ac{SE} gap observed in Section~\ref{sec:ee_performance} is structural and how much is only the chosen operating point. At $P_{\max}=30$~dBm the \ac{EE}-optimal designs in Fig.~\ref{fig:power} are about $3.3$~bit/s/Hz apart, while the maximum-\ac{SE} endpoints, where the power backoff is absent, differ by only about $1.7$~bit/s/Hz ($10.1$ versus $8.4$). The structural cost of $\|\Ym\|_2\le 1$ is therefore only $1.7$ of the $3.3$~bit/s/Hz. The remaining $1.6$~bit/s/Hz reflects the reduced transmit power selected by the \ac{EE} objective and can be traded back for \ac{SE} by moving along the boundary.

\looseness=-1 Figure~\ref{fig:frontier_combined}(b) shows the boundary evolution with array size. At $N{=}16$ all four architectures display non-trivial frontiers, with \ac{MiLAC}-aided widest ($\Delta\mathrm{EE}\approx441$~Mbit/J). By $N{=}64$, \ac{MiLAC}'s boundary lifts decisively above all benchmarks ($\Delta\mathrm{EE}\approx710$~Mbit/J), while hybrid-\ac{SC} retains a moderate frontier ($91$~Mbit/J) and digital ($41$~Mbit/J) and hybrid-\ac{FC} ($9$~Mbit/J) compress. At $N{=}256$, \ac{MiLAC} still spans by far the widest frontier ($\Delta\mathrm{EE}\approx195$~Mbit/J), while digital and hybrid-\ac{FC} collapse to near-single points as their $N$-scaled static power ($10$ and $22$~W, respectively) dominates the total power consumption, and hybrid-\ac{SC} retains only a narrow frontier ($\Delta\mathrm{EE}\approx9$~Mbit/J). This confirms that hybrid-\ac{SC}'s $N$-linear \ac{PS} scaling provides structural resilience over hybrid-\ac{FC} but still falls short of the passive-network approach, and that \ac{MiLAC}'s operating flexibility, not just the \ac{EE}-optimal point, grows with array size.

Together, Figures~\ref{fig:frontier_combined}(a) and~(b) show that \emph{\ac{MiLAC}-aided beamforming's operating flexibility expands along both the power and array-size axes}, letting an operator slide along the boundary to match a quality-of-service \ac{SE} target while harvesting the residual \ac{EE} gain.

\section{Conclusion}\label{sec:conclusion}

This paper established the first quantization-aware \ac{EE} optimization framework for \ac{MiLAC}-aided MU-MISO beamforming. Exploiting that \ac{MiLAC}'s stream-domain \ac{DAC} placement induces a full-covariance \ac{AQNM} distortion, we proved that the beamformer search can be restricted to the row space of the channel matrix without loss of optimality, yielding a $K{\times}K$ reduced-dimension parameterization that cuts the dominant per-iteration cost from $N$- to $K$-scaling without approximation. On this reduced problem, a Dinkelbach-\ac{WMMSE} algorithm with closed-form weighted power updates and a \ac{PGD}-based $\Ym$-update converges monotonically to a stationary point. For the \ac{SE}-\ac{EE} tradeoff, an alternative $K{\times}K$ variable removes the bilinearity and an auxiliary-variable lift solved by \ac{SCA} yields a convex per-iteration subproblem, with every limit point of the iterates satisfying the \ac{KKT} conditions at each weight, enabling the tradeoff boundary to be traced via the weighted-sum method.

On a site-specific DeepMIMO deployment, \ac{MiLAC} achieves the highest \ac{EE} across a $10$-$40$~dBm power sweep at a moderate \ac{SE} cost, outperforming both \ac{FC} and \ac{SC} hybrid benchmarks, and it keeps the highest \ac{EE} over digital and hybrid-\ac{SC} until the per-admittance drive-circuit power exceeds $0.56$~mW, $16$ times the adopted value. The tradeoff boundary analysis further shows that \ac{MiLAC}-aided beamforming expands the achievable operating region along both the power and array-size axes, an advantage digital and hybrid benchmarks cannot match.

Hardware-calibrated power measurements on \ac{MiLAC} prototypes are the most pressing next step. Further extensions include using hybrid digital-\ac{MiLAC} beamforming structure to elevate \ac{SE} while preserving \ac{EE} advantage, replacing the \ac{FC} network by a stem-connected one, which needs an order of magnitude fewer admittances with no significant sum-rate loss~\cite{Nerini2025Reduced,Zhang2026StemBF}, although the sparser network also has fewer spare degrees of freedom to compensate for coarse admittance resolution and may need more drive power per admittance, so the net \ac{EE} effect deserves its own study, extending the feasible beamformer set and the power model to lossy, mutually coupled, or non-reciprocal admittance networks~\cite{Nerini2026Physics,Peng2026LossyBDRIS}, and evaluating multi-cell and multi-carrier deployment.

\appendices

\bibliographystyle{IEEEtran}
\bibliography{references}

\end{document}